\theoremstyle{plain}
\newtheorem{theorem}{\protect\theoremname}
  \theoremstyle{plain}
  \newtheorem{conjecture}{\protect\conjecturename}
  \theoremstyle{plain}
  \theoremstyle{plain}
   \newtheorem{lemma}{\protect\lemmaname}
  \theoremstyle{remark}
\theoremstyle{assumption}
    \theoremstyle{proposition}
  \newtheorem{proposition}{\protect\propositionname}
\theoremstyle{algorithm}
  \providecommand{\definitionname}{Definition}
  \providecommand{\lemmaname}{Lemma}
  \providecommand{\propositionname}{Proposition}
  \providecommand{\remarkname}{Remark}
\providecommand{\theoremname}{Theorem}
\providecommand{\conjecturename}{Conjecture}
\providecommand{\assumptionname}{Assumption}
\providecommand{\algorithmname}{Algorithm}
\begin{document}

 \title{Minimization of Age-of-Information in Remote Sensing with Energy Harvesting
 \thanks{Akanksha Jaiswal is with the Department of Electrical Engineering, Indian Institute of Technology, Delhi. Email:   akanksha.jaiswal@ee.iitd.ac.in, Arpan Chattopadhyay is with the Department of Electrical Engineering and the Bharti School of Telecom Technology and Management, Indian Institute of Technology, Delhi. Email:   arpanc@ee.iitd.ac.in.  }
 \thanks{This work was supported by the faculty seed grant and professional development allowance (PDA) of IIT Delhi.}
}

 \author{
Akanksha Jaiswal, Arpan Chattopadhyay \\
\vspace{-0.4in}
 }

\maketitle

\begin{abstract} 
 In this paper, minimization of time-averaged age-of-information (AoI) in an energy harvesting (EH) source  equipped remote sensing setting is considered.  The EH source opportunistically samples one or multiple processes over discrete time instants, and sends the status updates to a sink node over a time-varying wireless link. At any discrete time instant, the EH node decides whether to probe the link quality using its stored energy, and further decides whether to sample a process and communicate the data based on the channel probe outcome.   The trade-off is between the freshness of information available at the sink node and the available energy at the energy buffer of the source node. To this end, an infinite horizon Markov decision process theory is used to formulate the problem of minimization of time-averaged expected  AoI for a single energy harvesting source node. The following two   scenarios are considered:  (i) single process with channel state information at transmitter (CSIT),  (ii) multiple processes with  CSIT.  In each scenario, for probed channel state, the optimal source node sampling policy is shown to be a threshold policy involving the instantaneous age of the process(es), the available energy in the buffer and the instantaneous channel quality as the decision variables. Finally, numerical results are provided to demonstrate the policy structures and trade-offs.
\end{abstract}

\begin{IEEEkeywords}
Age-of-information, remote sensing, Markov decision process (MDP).
\end{IEEEkeywords}

\section{Introduction}\label{section:introduction}
In recent years, the need for combining the physical systems with the cyber-world has attracted significant research interest. These cyber-physical systems (CPS) are supported by  ultra-low power, low latency IoT networks, and encompass a  large number of applications such as vehicle tracking, environment monitoring, intelligent transportation, industrial process monitoring, smart home systems etc. Such systems often require deployment of sensor nodes to monitor a physical process and send the real time status updates to a  remote estimator over a wireless network. However, for such critical CPS applications, minimizing mean packet  delay without accounting for delay jitter can often be detrimental for the system performance. Also, mean delay minimization does not guarantee  delivery of the observation packets to the sink node in the same order in which they were generated, thereby often  resulting in unnecessarily dedicating network resources towards delivering outdated observation packets despite the availability of a freshly generated observation packet in the system. Hence, it is necessary to take into account the freshness of information of the data packets, apart from the mean packet delay.  

Recently, a metric named  Age of Information (AoI) has been  proposed \cite{kaul2012real} as a  measure of the freshness of the information update. In this setting, a sensor  monitoring a system generates time stamped status update and sends it to the sink over a network. At time $t$, if the latest monitoring information available to the sink node comes from a packet whose time-stamped generation instant was $t'$, then the AoI at the sink node is computed as $(t-t’)$. Thus, AoI has emerged as an alternative performance metric to mean delay \cite{talak2018can}.

 \begin{figure}[t!]
    \begin{center}
  \includegraphics[height=2.5cm,width=7cm]{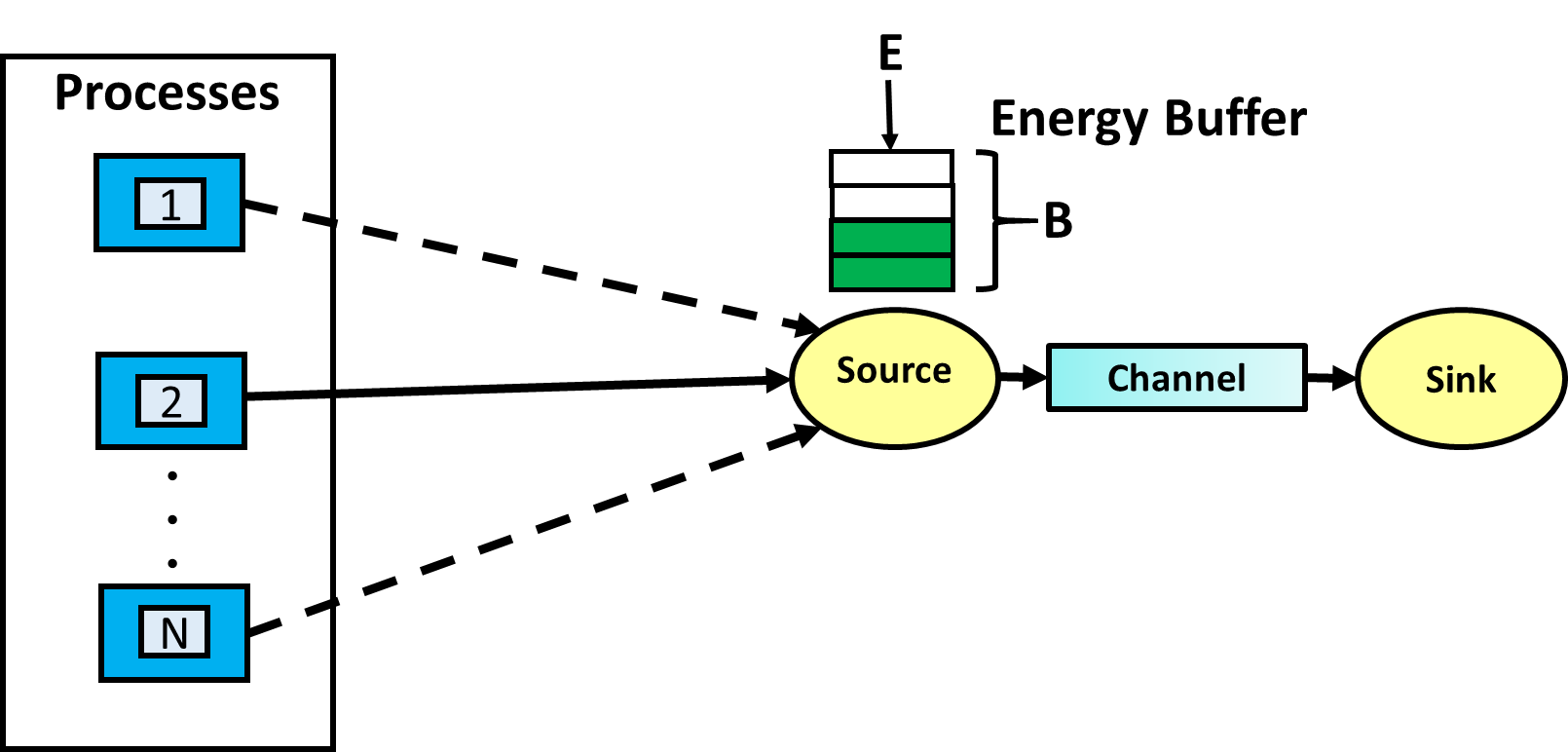}
  \caption{Pictorial representation of a remote sensing system where an EH source samples one of $N$ number of processes at a time and sends the observation packet to a sink node.}
  \label{system-model}
  \end{center}
  \vspace{-8mm}
\end{figure}

 However, timely delivery of the status updates is often limited by energy and  bandwidth constraints in the network. Recent efforts towards designing EH source nodes ({\em e.g.}, source nodes equipped with solar panels) have opened a new research paradigm for IoT network operations.  The energy generation process in such nodes are very uncertain and typically modeled as a stochastic process. The harvested energy is stored in an energy buffer as energy packets, and used for sensing and communication as and when needed. This EH capability significantly improves network lifetime and eliminates the need for frequent manual battery replacement, but poses a new challenge towards network operations due to uncertainty in the  available energy at the source nodes at any given time. 
 
 Motivated by the above challenges, we consider the problem of minimizing the time-averaged expected AoI in a remote sensing setting, where a single EH source  probes the channel state, samples one or multiple processes and sends the observation packets to the sink node over a fading channel. Energy generation process is modeled as a discrete-time i.i.d. process, and  a finite energy buffer is considered. Two variants of the problem are considered:  (i) single process with CSIT,  (ii) multiple processes with CSIT. Channel state probing and process sampling for time-averaged expected AoI minimization problem is formulated as an MDP, and the threshold nature of the optimal policy is established analytically for each case.  Numerical results validate the theoretical results  and intuitions.

 \vspace{-2.5mm}
 \subsection{Related work} 
  Initial efforts towards optimizing AoI mostly involved the analysis of various  queueing  models; {\em e.g.,}  \cite{kaul2012real} for analysing a  single source single server queueing system with FCFS service discipline, \cite{kaul2012status} for LCFS service discipline for M/M/1 queue,  \cite{yates2018age} and \cite{kaul2020timely} for multi-source single sink system with M/M/1 queueing at each source, \cite{kosta2019age}  for  AoI performance analysis for multi-source single-sink infinite-buffer queueing system where unserved packets  are substituted by available newer ones, etc. 
  
  On the other hand, a number of papers   have considered   AoI minimization problem under EH setting:  \cite{farazi2018age} for derivation of average AoI for a single source having finite battery capacity,  \cite{arafa2017age} for derivation of  the minimal age policy for EH two hop network, \cite{farazi2018average} for  average AoI expression for single source EH server,  \cite{hu2018age} for AoI minimization for wirelessly powered user, \cite{arafa2017age1} for  sampling, transmission scheduling and transmit power selection for single source single sink system over infinite time horizon  where delay is dependent on packet transmit energy.

 There have also been several other works on developing optimal scheduling policy for  minimizing AoI for EH sensor networks  \cite{wu2017optimal,yang2015optimal,feng2018age,bacinoglu2015age,leng2019age,gindullina2020age,ceran2019reinforcement,arafa2018age,bacinoglu2017scheduling}.  For e.g., \cite{wu2017optimal} has investigated optimal online policy for single sensor single sink system with a noiseless channel, for infinite, finite and unit  size battery capacity; for the finite battery size, it has provided energy aware status update policy. The paper 
 \cite{yang2015optimal} has considered a multi-sensor single sink system with {\em infinite} battery size, and proposed a randomized myopic scheduling scheme. 
 In \cite{feng2018age}, the optimal online status update policy to minimize the long run average AoI for EH source with updating erasures have been proposed. It has been shown that the best effort uniform updating policy is optimal when there is no feedback and best-effort uniform updating with retransmission (BUR) policy is optimal when feedback is available to the source. The authors of 
 \cite{bacinoglu2015age}  examined the problem of minimizing AoI under a constraint on the count of status updates.  The authors of  \cite{leng2019age} addressed AoI minimization problem for cognitive radio communication system with EH capability; they formulated  optimal sensing and updating for perfect and imperfect sensing as a partially observable Markov decision process (POMDP). Information source diversity, {\em i.e.,}  multiple sources tracking the same process but with dissimilar energy cost, and sending status updates to the EH monitoring node with finite battery size, has been considered in \cite{gindullina2020age} with an MDP formulation, but no structure was provided for the optimal policy. In \cite{ceran2019reinforcement}, reinforcement learning has been used to minimize AoI for a single EH sensor with HARQ protocol, but no clear intuition on the policy structure was provided.  The authors of \cite{bacinoglu2017scheduling} have developed a threshold policy for minimizing AoI for a single sensor single sink system with erasure channel and no channel feedback. For a system with Poisson energy arrival, unit battery size and error-free channel, it has shown that a threshold policy achieves average age lower than that of zero-wait policy; based on this,  lower bound on average age for general battery size and erasure channel has been derived. In \cite{abd2020reinforcement}, the authors have proposed optimal sampling threshold policy using MDP formulation for a system consists of multiple sources RF powered by the destination.   

 \vspace{-2.5mm}

\subsection{Our contributions and organization}
\begin{enumerate}
    \item We formulate the problem of minimizing the time-averaged expected AoI in an EH remote sensing system  with a single source monitoring one or multiple processes, as an MDP with two stage action model, which is different from standard MDP in the literature.  Under the assumptions of i.i.d. time-varying channel with CSIT, channel state probing capability at the source, and finite battery size, we derive the optimal policy structures which turn out to be simple threshold policies. The source node, depending on the current age of a process, decides whether to probe  the channel or not. Afterwards, based on the channel probe  outcome, the source node decides  whether to sample the process and send an observation packet, or to remain idle. Thus, the MDP involves taking action in two stages at each time instant.
     \item  We prove convergence of an analogue of  value iteration for this two-stage MDP.
    \item Numerical analysis shows that the threshold for multiple processes turns out to be a function of the relative age of the processes.
    \item We also prove certain interesting properties of various cost functions and some properties of the thresholds as a function of energy and age of the process.\\
    
\end{enumerate}
The rest of  the paper is organized as follows. System model has been explained in Section~\ref{section:system-model}. AoI minimization for the single source single process case is addressed in Section~\ref{section:single-sensor-single-process}.  AoI minimization policy for multiple process sensing is provided in Section~\ref{section:single-sensor-multiple-process}. Numerical results are provided in Section~\ref{section:numerical-work}, followed by the conclusions in Section~\ref{section:conclusion}. All proofs are provided in the appendices.

\section{System model}\label{section:system-model} 
We consider an EH source capable of sensing one out of $N$ different processes at a time,  and reporting the observation packet to a  sink node over a fading channel; see  Figure~\ref{system-model}. Time is discretized with the discrete time index $t \in \{0, 1,2,3, \cdots\}$. At each time, the  source node can decide whether to estimate the quality of the channel  from the source to the sink, or not. If the source node decides to probe the channel state, it can further decide whether to sample a process and communicate the data packet to the sink, or not, depending on the instantaneous channel quality. The source has a finite energy buffer of size $B$ units, where $E_{p}$ unit of buffer energy is used to probe channel state information and $E_{s}$ unit of buffer energy is used in sensing and communication. The energy packet generation process in the energy buffer is assumed to be an i.i.d. process with known mean. In case the energy buffer in a source node is full, the newly generated energy packets will not be accommodated unless $E_{p}$ unit of energy packet is spent in probing. Let  $A(t)$ denote the   number of energy packet arrivals to the energy buffer at time~$t$, and $E(t)$ denote the energy available to the source at time~$t$, for all $t \geq 0$.

We denote by $p(t)$ the probability of packet transmission success from the source to the sink node  at time~$t$. In this paper, we consider fading channel where $p(t) \in \{p_1, p_2, \cdots, p_m\}$ is i.i.d. across $t$, with $\mathbb{P}(p(t)=p_j)=q_j$ for all $j \in \{1,2,\cdots,m\}$. The channel state corresponding to channel success probability $p_j$ is denoted by $C_j$, and  the packet success probability corresponding to channel state $C_j$ is given by $p(C_j)=p_j$. Let us also denote by $r(t) \in \{0,1\}$ the indicator that the packet transmission from the source to the sink at time~$t$ is successful. Hence,   $\mathbb{P}(r(t)=1|C(t)=C_j)=p_j$ for all $j \in \{1,2,\cdots,m\}$. It is assumed that the channel state $C(t)$ is learnt perfectly via a channel probe.

At time~$t$, let $b(t)\in \{0,1\}$ denote the indicator of deciding to probe the channel, and $a(t) \in \{0,1,\cdots,N\}$ denote the identity of the process being sampled, with $a(t)=0$ meaning that no process is sampled, and $b(t)=0$ meaning that the channel is not probed. Also, $b(t)=0$ implies $a(t)=0$. The set of possible actions or decisions is denoted by $\mathcal{A}=\{\{0,0\}\cup \{1\times \{0,1,2,\cdots,N\}\}\}$, where a generic action at time~$t$ is denoted by $(b(t),a(t))$.

Let us denote by $\tau_k(t)\doteq \sup\{0 \leq \tau < t: a(\tau)=k, r(\tau)=1\}$ the last time instant before time~$t$, when process~$k$ was  sampled and the observation packet was successfully delivered to the sink. The age of information (AoI) for the $k$-th process at time~$t$ is   given by $T_k(t)=(t-\tau_k(t))$. However, if $a(t)=k$ and $r(t)=1$, then $T_k(t)=0$ since the current observation of the $k$-th process is available to the sink node.

A generic scheduling policy is a collection of mappings $\{\mu_t\}_{t \geq 0}$ from the available energy level, probed channel capability, and process sampling and data transmission  history summarized in the AoI of various processes,  to $\mathcal{A}$, which basically decides the decision rule at each time. Thus, the decision rule $\mu_t$ at time~$t$ takes the current state $s(t) $ as input and maps it to one decision in the action space $\mathcal{A}$. If $\mu_t=\mu$ for all $t \geq 0$, the policy is called stationary, else non-stationary. 

We seek to find a stationary scheduling policy $\mu$ that minimizes the expected AoI, summed over nodes and averaged over time. In other words, we seek to solve the following mathematical problem:
\begin{align}\label{eqn:main-problem}
\footnotesize
 \min_{\mu} \frac{1}{T } \sum_{t=0}^T \sum_{k=1}^N \mathbb{E}_{\mu} (T_k(t))
 \normalsize
\end{align}

\section{Single source sensing single process} \label{section:single-sensor-single-process}
In this section, we derive the optimal channel probing, source activation and data transmission policy for a single EH source sampling a single process ($N=1$), which will provide   insights to develop process sampling policy for $N>1$.

 Here, we formulate (1) as a long-run average cost MDP with state space $\mathcal{S} \doteq \{0,1,\cdots,B\} \times \mathbb{Z}_+$ and an intermediate state space $\mathcal{V}=\{0, 1,…..,B\}\times \mathbb{Z}_{+} \times \{C_{1}, C_{2},\cdots, C_{m}\}$ where a generic state $s = (E,T)$ which means that the energy buffer has $E$ energy packets, and  the source was last activated $T$ slots ago. A generic intermediate state $v = (E,T,C)$ which additionally means that the current channel state $C$, obtained via probing, has packet success probability $p(C)$. The action space $\mathcal{A}=\{\{0,0\}\cup \{1\times\{0,1 \}\}\}$ with  $a(t), b(t) \in \{0,1\}$. At each time, if the source node decides not to probe the channel state then it will not perform sampling, and thus $b(t) = 0, a(t) = 0$, and the expected single-stage AoI cost  is $c(s(t), b(t), a(t)) = T$. However, if the source node decides to probe the channel state, the expected single-stage AoI cost is $c(v(t), b(t)=1, a(t)=0) = T$, and $c(v(t), b(t)=1, a(t)=1) = T(1-p(C))$, where the expectation is taken over packet success probability $p(C)$. We first formulate the average-cost MDP problem as an $\alpha$-discounted cost MDP problem with $\alpha \in(0, 1)$, and derive  the optimal policy, from which the solution of the average cost minimization problem can be obtained by taking $\alpha \rightarrow 1$. 

\subsubsection{Optimality equation}
Let $J^{*}(E,T)$ be the optimal value function for state $(E,T)$ in  the discounted cost problem, and let $W^*(E,T,C)$ be the cost-to-go from an intermediate state $(E,T,C)$. The Bellman equations are given by:
\scriptsize
\begin{eqnarray} \label{eqn:Bellman-eqn-single-sensor-single-process-with-fading-general}
J^{*}(E \geq E_{p}+E_{s},T)&=&min \bigg\{T+\alpha \mathbb{E}_{A}J^{*}(min\{E+A,B\},T+1), \nonumber\\
&&V^{*}(E,T) \bigg\}\nonumber\\
V^{*}(E,T)&=& \sum_{j=1}^m q_{j} W^{*}(E,T,C_{j})\nonumber\\ 
& \nonumber\\
& \nonumber\\
W^{*}(E,T,C)&=&min\{T+\alpha  \mathbb{E}_{A}J^{*}(E-E_{p}+A,T+1),\nonumber\\ 
&&T(1-p(C))+\alpha p(C)\mathbb{E}_{A}J^{*}(E-E_{p}-E_{s}+\nonumber\\
&&A,1)+\alpha (1-p(C))\mathbb{E}_{A}J^{*}(E-E_{p}-E_{s}+A,\nonumber\\
&&T+1)\}\nonumber\\
J^{*}(E< E_{p}+E_{s},T)&=&T+\alpha \mathbb{E}_{A}J^{*}(min\{E+A,B\},T+1)
\end{eqnarray}
\normalsize

The first expression in the minimization in the R.H.S. of the first equation in \eqref{eqn:Bellman-eqn-single-sensor-single-process-with-fading-general} is the cost of  not probing channel state ($b(t)=0$), which includes single-stage AoI cost $T$  and an $\alpha$ discounted future cost with a random next state $(min\{E+A,B\},T+1)$, averaged over the distribution of the number of energy packet generation $A$. The quantity $V^{*}(E,T)$ is the expected cost of probing the channel state, which explains the second equation in \eqref{eqn:Bellman-eqn-single-sensor-single-process-with-fading-general}. At an intermediate state $(E,T,C)$, if $a(t)=0$, a single stage AoI cost $T$ is incurred and the next state becomes $(E-E_p+A,T+1)$; if $a(t)=1$, the expected AoI cost is $T(1-p(C))$ (expectation taken  over the packet success probability $p(C)$),  and the next random state becomes $(E-E_p-E_s+A,1)$ and $(E-E_p-E_s+A,T+1)$ if $r(t)=1$ and $r(t)=0$, respectively. The last equation in \eqref{eqn:Bellman-eqn-single-sensor-single-process-with-fading-general} follows similarly since $b(t)=0, a(t)=0$ is the only possible action when $E<E_p+E_s$.  

Substituting the value of $V^{*}(E,T)$ in the first equation of  \eqref{eqn:Bellman-eqn-single-sensor-single-process-with-fading-general}, we obtain the following Bellman equations:

\scriptsize
\begin{eqnarray}\label{eqn:Bellman-eqn-single-sensor-single-process-with-fading}
 J^{*}(E \geq E_{p}+E_{s},T)&=&min \bigg\{T+\alpha \mathbb{E}_{A}J^{*}(min\{E+A,B\},T+1),\nonumber\\ 
&&\mathbb{E}_{C} \bigg( min\{T+\alpha  \mathbb{E}_{A}J^{*}(E-E_{p}+A,T+1),\nonumber\\ 
&&T(1-p(C))+\alpha p(C)\mathbb{E}_{A}J^{*}(E-E_{p}-E_{s}+\nonumber\\
&&A,1)+\alpha (1-p(C))\mathbb{E}_{A}J^{*}(E-E_{p}-E_{s}+\nonumber\\
&&A,T+1)\} \bigg) \bigg\} \nonumber\\
J^{*}(E < E_{p}+E_{s},T)&=&T+\alpha \mathbb{E}_{A}J^{*}(min\{E+A,B\},T+1) 
\end{eqnarray}

\normalsize

\subsubsection{Policy structure}
\begin{proposition} \label{proposition:convergence-of-value-function-J}
The value function $J^{(k)}(s)$ converges to $J^{*}(s)$  as k tends to $\infty$.
\end{proposition}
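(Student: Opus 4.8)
The plan is to treat \eqref{eqn:Bellman-eqn-single-sensor-single-process-with-fading} as the fixed-point equation of a single operator $\mathcal{T}$ acting on $[0,\infty]$-valued functions on $\mathcal{S}$, where $\mathcal{T}J$ is obtained by substituting $J$ for $J^{*}$ in the right-hand side (the intermediate-state equation for $W^{*}$ being folded into the channel-state average $\mathbb{E}_{C}$), and then to run the argument of negative (nonnegative-cost) dynamic programming rather than a Banach contraction argument, since the single-stage AoI cost $T$ is unbounded and no supremum-norm contraction is available here. Set $J^{(0)}\equiv 0$ and $J^{(k+1)}=\mathcal{T}J^{(k)}$; this is exactly value iteration for the two-stage model, and equivalently $J^{(k)}(s)$ is the optimal expected $\alpha$-discounted cost over a horizon of $k$ slots from $s$ with zero terminal cost.

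The first block of work is to record three structural properties of $\mathcal{T}$. (i) \emph{Monotonicity}: $J\le J'$ pointwise implies $\mathcal{T}J\le\mathcal{T}J'$, which is immediate since every branch inside the nested minima and inside $\mathbb{E}_{C}$ is affine and nondecreasing in the argument function. (ii) \emph{Continuity from below}: if $J_{n}\uparrow J$ pointwise then $\mathcal{T}J_{n}\uparrow\mathcal{T}J$; here $\mathbb{E}_{A}$ passes to the limit by the monotone convergence theorem, $\mathbb{E}_{C}$ is a finite convex combination, and each minimum is over a fixed finite action set, so all three operations commute with monotone limits. (iii) \emph{Finiteness}: the stationary policy that never probes keeps the age at $T+n$ after $n$ slots, so its discounted cost from $(E,T)$ is $\bar{J}(T):=\tfrac{T}{1-\alpha}+\tfrac{\alpha}{(1-\alpha)^{2}}<\infty$; hence $J^{*}(E,T)\le\bar{J}(T)$, and since the age argument appearing on the right-hand side of \eqref{eqn:Bellman-eqn-single-sensor-single-process-with-fading} never exceeds $T+1$ (independently of $A$), $\mathcal{T}$ maps any function dominated in its age argument by $\bar{J}$ to a finite-valued function, so all the iterates $J^{(k)}$ are finite.

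Next I would establish convergence of the iterates. Since the single-stage costs are nonnegative, $J^{(1)}=\mathcal{T}J^{(0)}\ge 0=J^{(0)}$, and applying (i) inductively gives $J^{(k)}\uparrow$ in $k$. On the other hand, truncating any policy to its first $k$ slots and using nonnegativity of costs gives $J^{(k)}(s)\le J^{*}(s)\le\bar{J}(T)<\infty$ for every $k$. Hence $J^{(k)}\uparrow J^{(\infty)}$ for some finite-valued $J^{(\infty)}\le J^{*}$, and passing to the limit in $J^{(k+1)}=\mathcal{T}J^{(k)}$ using (ii) yields $J^{(\infty)}=\mathcal{T}J^{(\infty)}$, i.e.\ $J^{(\infty)}$ solves \eqref{eqn:Bellman-eqn-single-sensor-single-process-with-fading}.

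It then remains to prove the reverse inequality $J^{(\infty)}\ge J^{*}$. Because the feasible action set is finite at every state, the minima defining $\mathcal{T}J^{(\infty)}$ are attained, so there is a stationary policy $\mu$ greedy with respect to $J^{(\infty)}$ (choosing $b$ greedily and, when $b=1$, choosing $a$ greedily for each probed channel state $C$); writing $\mathcal{T}_{\mu}$ for the two-stage Bellman operator with decisions frozen to $\mu$, we get $\mathcal{T}_{\mu}J^{(\infty)}=\mathcal{T}J^{(\infty)}=J^{(\infty)}$, hence $J^{(\infty)}=\mathcal{T}_{\mu}^{k}J^{(\infty)}$ for all $k$. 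Expanding, $\mathcal{T}_{\mu}^{k}J^{(\infty)}(s)$ equals the expected $\alpha$-discounted cost of $\mu$ over the first $k$ slots plus $\alpha^{k}\mathbb{E}_{\mu}[J^{(\infty)}(s_{k})]\ge 0$; letting $k\to\infty$ and using monotone convergence gives $J^{(\infty)}(s)\ge J^{\mu}(s)\ge J^{*}(s)$, so $J^{(\infty)}=J^{*}$, which is the claim. I expect the main obstacle to be point (iii) together with making the negative-DP scheme go through for the nonstandard two-stage action model --- specifically, verifying carefully that the composite operator $\mathcal{T}$ (with its nested minimum and channel expectation) is monotone, continuous from below, and finiteness-preserving, and that the greedy policy extracted from the fixed point is a genuine stationary two-stage policy; once these are in place the monotone-convergence argument is routine.
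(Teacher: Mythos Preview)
Your proposal is correct, and it takes a genuinely different route from the paper. The paper argues by a sup-norm contraction: setting $e_{k}=\sup_{s\in\mathcal{S}}|J^{(k)}(s)-J^{*}(s)|$, it shows $e_{k+1}\le\alpha e_{k}$ and hence $e_{k}\le\alpha^{k}e_{0}\to 0$. You instead run the negative-DP/monotone-convergence argument: $J^{(0)}\equiv 0$, $J^{(k)}\uparrow J^{(\infty)}\le J^{*}$, $J^{(\infty)}$ is a fixed point of $\mathcal{T}$ by continuity from below, and then $J^{(\infty)}\ge J^{*}$ via a stationary two-stage greedy policy and a tail bound.

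What each approach buys: the paper's contraction argument is short and makes the two-stage structure almost invisible, but it needs $e_{0}<\infty$. Since $\mathcal{S}$ is countably infinite and the single-stage cost $T$ is unbounded, $J^{*}$ is unbounded in $T$ (indeed $J^{*}(E,T)\ge T(1-\max_{j}p_{j})$ for $E\ge E_{p}+E_{s}$ and $J^{*}(E,T)\ge T$ otherwise), so with the natural initialization $J^{(0)}\equiv 0$ used elsewhere in the paper one has $e_{0}=\infty$ and the inequality $e_{k}\le\alpha^{k}e_{0}$ is vacuous; the paper's proof implicitly requires $J^{(0)}$ to lie within a bounded sup-norm distance of $J^{*}$, or a weighted-norm refinement. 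Your argument avoids this issue entirely: nonnegativity of costs gives the monotone sandwich $0\le J^{(k)}\le J^{*}\le\bar{J}$, and your finiteness check (iii) together with the age bound $T\mapsto T+1$ keeps all iterates finite. The price you pay is the extra bookkeeping for the composite operator (monotonicity, MCT through $\mathbb{E}_{A}$, finite minima through $\mathbb{E}_{C}$) and the explicit construction of a two-stage greedy $\mu$, but these are exactly the points you flagged and they go through as you outlined.
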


\begin{proof}See Appendix ~\ref{appendix:proof-of-convergence-of-value-function-J}.
\end{proof}
We provide the convergence proof of value iteration since we have a two-stage decision process as opposed to traditional MDP where a single action is taken.
\begin{lemma} \label{lemma:single-sensor-single-process-J-decreasing-in-p}
For $N=1$, the value function 
$J^{*}(E, T)$ is increasing in $T$ and $W^{*}(E,T,C)$ is decreasing in $p(C)$.  
\end{lemma}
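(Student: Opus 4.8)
The plan is to establish the monotonicity of $J^*$ in $T$ by induction on the value-iteration index, using the convergence guaranteed by Proposition~\ref{proposition:convergence-of-value-function-J}, and then to read off the monotonicity of $W^*$ in $p(C)$ as a direct algebraic consequence. Throughout, ``increasing'' is understood in the weak (non-decreasing) sense.

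First I would show by induction on $k$ that $J^{(k)}(E,T)$ is non-decreasing in $T$ for every fixed $E$. Initializing $J^{(0)}\equiv 0$ makes the base case trivial. For the inductive step I would apply the value-iteration operator, i.e.\ the right-hand side of \eqref{eqn:Bellman-eqn-single-sensor-single-process-with-fading} with $J^*$ replaced by $J^{(k)}$, and check that monotonicity in $T$ is preserved. The relevant observations are: (i) the single-stage costs $T$ and $T(1-p(C))$ are non-decreasing in $T$ since $p(C)\le 1$; (ii) every term of the form $\mathbb{E}_A J^{(k)}(\cdot\,,T+1)$ is non-decreasing in $T$ by the induction hypothesis, while terms of the form $\mathbb{E}_A J^{(k)}(\cdot\,,1)$ are constant in $T$, and the energy shifts by $E_p,E_s$ and the cap at $B$ do not touch the $T$-argument; (iii) finite sums, the expectations $\mathbb{E}_A$ and $\mathbb{E}_C$, and pointwise minima all map functions non-decreasing in $T$ to functions non-decreasing in $T$. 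Combining these, both branches of \eqref{eqn:Bellman-eqn-single-sensor-single-process-with-fading} (the $E\ge E_p+E_s$ case and the $E<E_p+E_s$ case) send a function non-decreasing in $T$ to one that is again non-decreasing in $T$, so $J^{(k+1)}(E,T)$ inherits the property. Letting $k\to\infty$ and invoking Proposition~\ref{proposition:convergence-of-value-function-J}, together with the fact that a pointwise limit of non-decreasing functions is non-decreasing, yields that $J^*(E,T)$ is increasing in $T$.

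For the statement about $W^*$, I would argue directly from the third equation of \eqref{eqn:Bellman-eqn-single-sensor-single-process-with-fading-general}. The first term of the minimization defining $W^*(E,T,C)$ does not depend on $C$. Writing $p:=p(C)$, $X_1:=\mathbb{E}_A J^*(E-E_p-E_s+A,1)$ and $X_T:=\mathbb{E}_A J^*(E-E_p-E_s+A,T+1)$, the second term equals $(T+\alpha X_T)-p\big(T+\alpha(X_T-X_1)\big)$, which is affine in $p$ with slope $-\big(T+\alpha(X_T-X_1)\big)$. Since $T\ge 0$ and, by the first part of the lemma, $X_T\ge X_1$ (because $T+1\ge 1$), this slope is non-positive, so the second term is non-increasing in $p$. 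As the minimum of a quantity independent of $p$ and a quantity non-increasing in $p$ is itself non-increasing in $p$, and $W^*$ depends on $C$ only through $p(C)$, it follows that $W^*(E,T,C)$ is decreasing in $p(C)$.

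The main obstacle I anticipate is bookkeeping rather than conceptual: one must carefully verify that each of the several branches of the two-stage Bellman recursion in \eqref{eqn:Bellman-eqn-single-sensor-single-process-with-fading} preserves monotonicity in $T$, keeping straight which arguments are $T+1$ (so the induction hypothesis applies) versus reset to $1$ (constant in $T$), and being careful that the ordering $J^*$ is shown to have is used only \emph{after} it has been established, so that the $W^*$ argument is not circular. Everything else --- the limiting step and the affine-in-$p$ computation --- is routine.
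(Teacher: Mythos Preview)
Your proposal is correct and follows essentially the same route as the paper: value iteration starting from $J^{(0)}\equiv 0$, induction on $k$ to show $J^{(k)}(E,T)$ is non-decreasing in $T$ via the closure of monotone functions under sums, expectations, and pointwise minima, and then passage to the limit via Proposition~\ref{proposition:convergence-of-value-function-J}. Your treatment of the $W^*$ claim is in fact more explicit than the paper's own proof of the lemma, which focuses on the $T$-monotonicity and leaves the $p(C)$-monotonicity implicit; the affine-in-$p$ computation you give is exactly the argument the paper later deploys in the proof of Theorem~\ref{theorem:single-sensor-single-process-with-fading-policy-structure}.
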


\begin{proof} See Appendix ~\ref{appendix:proof-of-lemma-single-sensor-single-process-fading-cost-increasing-in-T-decrease-in-p}.
\end{proof}
\begin{conjecture}\label{conjecture:single-sensor-single-process-with-fading-policy-structure}
 For $N=1$, the optimal probing policy for the $\alpha$-discounted AoI cost minimization problem is a threshold policy on $T$. For any $E \geq E_{p}+E_{s} $, the optimal action is to probe the channel state if and only if $T\geq T_{th}(E)$ for a threshold function $T_{th}(E)$ of $E$. 
\end{conjecture}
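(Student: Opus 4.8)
The plan is to recast the probing decision as a comparison of two action-value functions and to prove that the net benefit of probing is monotone in the age. Fix $E \ge E_{p}+E_{s}$ and write the ``do-not-probe'' cost as $f_{1}(E,T) \doteq T + \alpha\,\mathbb{E}_{A}J^{*}(\min\{E+A,B\},T+1)$ and the ``probe'' cost as $V^{*}(E,T) = \mathbb{E}_{C}W^{*}(E,T,C)$; by \eqref{eqn:Bellman-eqn-single-sensor-single-process-with-fading-general}, probing is optimal at $(E,T)$ if and only if $V^{*}(E,T) \le f_{1}(E,T)$. Hence it is enough to show that the probing advantage $h(E,T) \doteq f_{1}(E,T) - V^{*}(E,T)$ is nondecreasing in $T$ for each fixed $E$: then the probing region $\{T : h(E,T)\ge 0\}$ is upward closed, i.e.\ of the form $[T_{th}(E),\infty)$, which is precisely the asserted structure.

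To control $h$ I would use the elementary fact that if finitely many functions on $\mathbb{Z}_{+}$ all have forward $T$-increments bounded above by a common $\delta(\cdot)$, then so does their pointwise minimum, and this property survives the convex average $\mathbb{E}_{C}$. Let $f_{2}(E,T)$ and $f_{3}(E,T,C)$ be the two terms inside the minimization defining $W^{*}$ in \eqref{eqn:Bellman-eqn-single-sensor-single-process-with-fading-general}, so that $V^{*}(E,T)=\mathbb{E}_{C}\min\{f_{2}(E,T),f_{3}(E,T,C)\}$ (energy arguments of $J^{*}$ understood truncated at $B$, as the buffer dictates). Cancelling the terms linear in $T$, using that the successful-transmission reset term $\mathbb{E}_{A}J^{*}(\cdot,1)$ does not depend on $T$, and using $p(C)\in[0,1]$ together with $J^{*}$ being nondecreasing in $T$ (Lemma~\ref{lemma:single-sensor-single-process-J-decreasing-in-p}), the inequality $V^{*}(E,T+1)-V^{*}(E,T)\le f_{1}(E,T+1)-f_{1}(E,T)$ --- i.e.\ $h(E,\cdot)$ nondecreasing --- collapses to the single requirement that
\[
\mathbb{E}_{A}\big[J^{*}(\min\{E+A,B\},T+2)-J^{*}(\min\{E+A,B\},T+1)\big] \ \ge\ \mathbb{E}_{A}\big[J^{*}(e,T+2)-J^{*}(e,T+1)\big]
\]
for the smaller energy levels $e\in\{\min\{E-E_{p}+A,B\},\,\min\{E-E_{p}-E_{s}+A,B\}\}$. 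Thus the conjecture follows as soon as $J^{*}(E,T+1)-J^{*}(E,T)$ is known to be nondecreasing in $E$, i.e.\ that $J^{*}$ is supermodular on $\{0,1,\dots,B\}\times\mathbb{Z}_{+}$; combined with the second part of Lemma~\ref{lemma:single-sensor-single-process-J-decreasing-in-p}, which already makes the sampling sub-decision a threshold rule in the probed channel quality $p(C)$, this would render the entire optimal policy threshold-structured.

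Establishing the supermodularity of $J^{*}$ is the main obstacle, and is why the statement is posed as a conjecture. The natural route is an induction along value iteration, which converges to $J^{*}$ by Proposition~\ref{proposition:convergence-of-value-function-J}: one would carry the joint invariant that each iterate $J^{(k)}(E,T)$ is (i) nondecreasing in $T$ (already available from Lemma~\ref{lemma:single-sensor-single-process-J-decreasing-in-p}), (ii) convex in $T$, and (iii) supermodular in $(E,T)$, and check that the two-stage Bellman operator preserves (i)--(iii). Propagating (i)--(ii) through the two nested minimizations and the expectations $\mathbb{E}_{A},\mathbb{E}_{C}$ is routine; propagating (iii) is delicate, because the ``probe'' and ``do-not-probe'' branches evaluate $J^{(k)}$ at energy arguments shifted by the constants $E_{p}$ and $E_{p}+E_{s}$, so the nested minimizations couple the $E$- and $T$-directions in a way that need not respect supermodularity, and the reset term $\mathbb{E}_{A}J^{(k)}(\cdot,1)$ appearing only in the sampling branch perturbs it further. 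I would attack this by searching for a stronger invariant that the Bellman operator actually closes on --- for instance additionally controlling a discrete cross-difference that ties a unit increase in $E$ to a unit increase in $T$ --- after which the reduction above delivers the conjecture verbatim; short of a complete proof, the threshold structure is at least corroborated by the numerical results of Section~\ref{section:numerical-work}.
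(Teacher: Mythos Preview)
The paper does not prove this statement: it is explicitly labeled a \emph{Conjecture} and is supported in the paper only by the numerical evidence of Section~\ref{section:numerical-work} (Figure~\ref{fig2}(a)). There is therefore no proof to compare your attempt against.

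Your analysis in fact goes further than the paper. The reduction of the threshold property to monotonicity of the probing advantage $h(E,T)=f_{1}(E,T)-V^{*}(E,T)$ in $T$, and from there to supermodularity of $J^{*}$ in $(E,T)$, is correct: the upper-bound-on-increments property of a pointwise minimum that you invoke is valid, and after applying it to the two branches $f_{2}$ and $f_{3}(\cdot,\cdot,C)$ the only surviving requirement is indeed that the $T$-increment $\mathbb{E}_{A}[J^{*}(e,T+2)-J^{*}(e,T+1)]$ at the post-probe energy levels $e\in\{E-E_{p}+A,\,E-E_{p}-E_{s}+A\}$ be dominated by the same increment at the larger unprobed level $\min\{E+A,B\}$. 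You are also right that this is precisely where a rigorous argument stalls: pointwise minimization over actions does not in general preserve supermodularity, and the two-stage structure here---with the inner and outer branches shifting the energy argument by different constants $E_{p}$ and $E_{p}+E_{s}$, plus a reset-to-$1$ term that is constant in $T$---blocks a clean inductive invariant along the value iteration of Proposition~\ref{proposition:convergence-of-value-function-J}. Your candid identification of this obstacle, and the suggestion to search for a stronger invariant that the two-stage Bellman operator closes on, accurately reflects why the authors left the statement as a conjecture rather than a theorem.
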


\begin{theorem}\label{theorem:single-sensor-single-process-with-fading-policy-structure}
 For $N=1$, at any time, if the source decides to probe the channel, then the optimal sampling policy  is a threshold policy on $p(C)$. For any $E \geq E_{p}+E_{s} $ and probed channel state, the optimal action is to sample the source node if and only if $p(C)\geq p_{th}(E,T)$ for a threshold function $p_{th}(E,T)$ of $E$ and $T$. 
\end{theorem}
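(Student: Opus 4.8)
The plan is to work directly with the intermediate-state Bellman equation for $W^{*}(E,T,C)$ in \eqref{eqn:Bellman-eqn-single-sensor-single-process-with-fading-general} and to exploit the fact that, once the channel has been probed, the sampling decision reduces to a comparison between exactly two quantities. Fix $E \geq E_{p}+E_{s}$ and $T$, and let $Q_{0}(E,T)$ denote the first term inside the minimization defining $W^{*}$ (the cost of not sampling, i.e.\ $a(t)=0$) and $Q_{1}(E,T,C)$ the second term (the cost of sampling, $a(t)=1$). The optimal action at the intermediate state $(E,T,C)$ is to sample if and only if $Q_{1}(E,T,C) \leq Q_{0}(E,T)$, so it suffices to understand how the sign of $Q_{1}-Q_{0}$ varies with $p(C)$.

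First I would observe that $Q_{0}(E,T)=T+\alpha \mathbb{E}_{A}J^{*}(E-E_{p}+A,T+1)$ does not depend on the probed channel state at all. Next, regrouping $Q_{1}$ as a function of $p(C)$ yields
\[
Q_{1}(E,T,C)=\Big(T+\alpha \mathbb{E}_{A}J^{*}(E-E_{p}-E_{s}+A,T+1)\Big)+p(C)\,\Big(-T+\alpha \mathbb{E}_{A}J^{*}(E-E_{p}-E_{s}+A,1)-\alpha \mathbb{E}_{A}J^{*}(E-E_{p}-E_{s}+A,T+1)\Big),
\]
so that $Q_{1}$, and hence $Q_{1}-Q_{0}$, is \emph{affine} in the scalar $p(C)$. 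The theorem therefore hinges on signing the coefficient of $p(C)$.

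The key step is exactly this sign computation, and it is where Lemma~\ref{lemma:single-sensor-single-process-J-decreasing-in-p} enters: since $J^{*}(E',\cdot)$ is increasing in its age argument for every energy level $E'$, we have $J^{*}(E-E_{p}-E_{s}+A,1)\leq J^{*}(E-E_{p}-E_{s}+A,T+1)$ pointwise in $A$, and taking $\mathbb{E}_{A}$ preserves the inequality; hence the coefficient of $p(C)$ is at most $-T$, which is strictly negative for $T\geq 1$ (for $T=0$ the map $p(C)\mapsto Q_{1}$ is constant and the claim is vacuous). Consequently $p(C)\mapsto Q_{1}(E,T,C)-Q_{0}(E,T)$ is strictly decreasing, so there is a unique crossover value $p_{th}(E,T)$ solving $Q_{1}=Q_{0}$ --- with the natural conventions (always sample / never sample) if it falls outside the channel support or outside $[0,1]$ --- such that sampling is optimal if and only if $p(C)\geq p_{th}(E,T)$. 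Since $E$ and $T$ enter only through $Q_{0}$ and the two $J^{*}$-expectations, $p_{th}$ is a well-defined function of $(E,T)$, as claimed; and, as noted in the text, the structure carries over to the time-averaged problem by letting $\alpha\to 1$.

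I do not anticipate a serious obstacle: the core of the argument is simply that $Q_{1}$ is affine in $p(C)$ with slope controlled by the age-monotonicity of $J^{*}$ from Lemma~\ref{lemma:single-sensor-single-process-J-decreasing-in-p} (which itself rests on the value-iteration convergence of Proposition~\ref{proposition:convergence-of-value-function-J}). The only mildly delicate points are bookkeeping: treating the energy arguments $E-E_{p}-E_{s}+A$ consistently with the rest of the paper (capping at $B$ where appropriate), and phrasing the ``if and only if'' cleanly in the degenerate cases where $p_{th}(E,T)$ lies outside $\{p_{1},\dots,p_{m}\}$ or outside $[0,1]$ so that the stated equivalence still reads correctly.
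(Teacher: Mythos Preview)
Your proposal is correct and follows essentially the same approach as the paper: both arguments compare the two terms of $W^{*}(E,T,C)$, observe that the not-sampling cost is independent of $p(C)$ while the sampling cost is affine and nonincreasing in $p(C)$ by the age-monotonicity of $J^{*}$ from Lemma~\ref{lemma:single-sensor-single-process-J-decreasing-in-p}, and conclude the threshold structure. Your write-up is slightly more explicit about the affine decomposition and the degenerate threshold cases, but the underlying idea is identical.
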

\begin{proof} See Appendix ~\ref{appendix:proof-of-theorem-single-sensor-single-process-fading-threshold-policy}.
\end{proof}

The policy structure supports the following two intuitions. Firstly, given $E$ and $T$,  the source decides to probe the channel state   if AoI is greater than some threshold value. 
Secondly, given $E$ and $T$ and probed channel state, if the channel quality is better than a threshold, then the optimal action is to sample the process and communicate the observation to the sink node. We will later numerically observe in Section~\ref{section:numerical-work} some intuitive properties of $T_{th}(E)$ as a function of $E$ and $\lambda$, and $p_{th}(E,T)$ as a function of $E$, $T$ and $\lambda$.
 
 \vspace{1mm}
\section{Single source sensing multiple processes}\label{section:single-sensor-multiple-process}


In this section, we find the optimal policy for channel probing, source activation, process sampling and data communication in order to solve Problem~\eqref{eqn:main-problem}, when a single source can sample $N$ different processes, one at a time. 

Here we formulate the $\alpha$-discounted cost version of \eqref{eqn:main-problem} as an MDP with a generic state  $s=(E,T_1,T_2,\cdots, T_N)$ which means that the energy buffer has $E$ energy packets, and  the k-th process was last activated $T_{k}$ slots ago. Also, a generic intermediate state $v=(E,T_1,T_2,\cdots, T_N,C)$ which additionally means that the current channel state $C$ is learnt by probing, has packet success probability $p(C)$. The action space  $\mathcal{A}=\{\{0,0\}\cup \{1\times\{0,1,2,\cdots,N \}\}\}$ with $b(t)\in \{0,1\}$ and $a(t) \in \{0,1,\cdots,N\}$. At each time, if the source node decides not to probe the channel state then it will not sample any process, thus $b(t)=0, a(t)=0$ and the expected single stage AoI cost is $c(s(t),b(t), a(t))=\sum_{i=1}^N T_i$. However, if the source node decides to probe the channel state, the expected single stage  AoI cost is $c(v(t),b(t)=1, a(t)=0)=\sum_{i=1}^N T_i$ and $c(v(t), b(t)=1, a(t)=k)=\sum_{i \neq k} T_i+T_k(1-p(C))$ where the expectation is taken over packet success probability $p(C)$. 
\subsubsection{Optimality equation}
In this case, the Bellman equations are given by:

\scriptsize
\begin{eqnarray}\label{eqn:Bellman-eqn-single-sensor-multi-process-with-fading-general}
&&J^{*}(E \geq E_{p}+E_{s},T_1, T_2,\cdots, T_N)\nonumber\\
&=&min \bigg\{\sum_{i=1}^N T_i+\alpha \mathbb{E}_{A}J^{*}(min\{E+A,B\},T_1+1, T_2+1,\cdots, \nonumber\\
&&T_N+1), V^{*}(E,T_1, T_2,\cdots, T_N)\bigg\}\nonumber\\
&&V^{*}(E,T_1, T_2,\cdots, T_N)\nonumber\\
&=& \sum_{j=1}^m q_{j} W^{*}(E,T_1, T_2,\cdots, T_N,C_{j})\nonumber\\ 
&&W^{*}(E,T_1, T_2,\cdots, T_N,C)\nonumber\\ 
&=& min\{\sum_{i=1}^N T_i+\alpha \mathbb{E}_{A}J^{*}(E-E_{p}+A,T_1+1, T_2+1, \cdots,  T_N+1),\nonumber\\ 
&&\min_{1 \leq k \leq N} \bigg( \sum_{i \neq k } T_i+T_k(1-p(C))+\alpha p(C)\mathbb{E}_{A}J^{*}(E-E_{p}-E_{s}+A,\nonumber\\
&&T_1+1,T_2+1, \cdots, T_k'=1,T_{k+1}+1,\cdots,T_N+1)+\nonumber\\
&&\alpha (1-p(C))\mathbb{E}_{A}J^{*}(E-E_{p}-E_{s}+A,T_1+1,T_2+1, \cdots, T_N+1)\bigg)\}\nonumber\\
&&J^{*}(E < E_{p}+E_{s},T_1, T_2,\cdots, T_N)\nonumber\\
&=&\sum_{i=1}^N T_i+\alpha \mathbb{E}_{A}J^{*}(min\{E+A,B\},T_1+1, T_2+1, \cdots,T_N+1) 
\end{eqnarray}
\normalsize

The first expression in the minimization in the R.H.S. of the first equation in \eqref{eqn:Bellman-eqn-single-sensor-multi-process-with-fading-general} is the cost of  not probing channel state ($b(t)=0$), which includes single-stage AoI cost $\sum_{i=1}^N T_i$ and an $\alpha$ discounted future cost with a random next state $( \min\{E+A,B\},T_1+1, T_2+1, \cdots,T_N+1 )$, averaged over the distribution of the number of energy packet generation $A$. The quantity $V^{*}(E,T_1, T_2,\cdots, T_N)$ is the optimal expected cost of probing the channel state, which explains the second equation in
\eqref{eqn:Bellman-eqn-single-sensor-multi-process-with-fading-general}. At an intermediate state $(E,T_1, T_2,\cdots, T_N,C)$, if $a(t)=0$,
 a single stage AoI cost $\sum_{i=1}^N T_i$  is incurred and the next state becomes $(E-E_{p}+A,T_1+1, T_2+1, \cdots,T_N+1)$; if $a(t)=k$, the expected AoI cost is $\sum_{i \neq k } T_i+T_k(1-p(C))$ (expectation taken  over the packet success probability $p(C)$),  and the next random state becomes  $(E-E_{p}-E_{s}+A,T_1+1,T_2+1, \cdots, T_k'=1,   T_{k+1}+1,\cdots,T_N+1)$ and $(E-E_{p}-E_{s}+A,T_1+1,T_2+1, \cdots, T_N+1)$ if $r(t)=1$ and $r(t)=0$, respectively. The last equation in \eqref{eqn:Bellman-eqn-single-sensor-multi-process-with-fading-general} follows similarly since $b(t)=0, a(t)=0$ is the only possible action when $E<E_p+E_s$.

Substituting the value of $V^{*}(E,T_1, T_2,\cdots, T_N)$ in the first equation of  \eqref{eqn:Bellman-eqn-single-sensor-multi-process-with-fading-general}, we obtain the Bellman equations
\eqref{eqn:Bellman-eqn-single-sensor-multiple-process-with-fading}

\begin{figure*}
\scriptsize
\begin{eqnarray}\label{eqn:Bellman-eqn-single-sensor-multiple-process-with-fading}
J^{*}(E \geq E_{p}+E_{s},T_1, T_2,\cdots, T_N)&=&min \bigg\{\sum_{i=1}^N T_i+\alpha \mathbb{E}_{A}J^{*}(min\{E+A,B\},T_1+1, T_2+1, \cdots,T_N+1),  \nonumber\\
&&\mathbb{E}_{C} \bigg( min\{\sum_{i=1}^N T_i+\alpha \mathbb{E}_{A}J^{*}(E-E_{p}+A,T_1+1, T_2+1, \cdots,  T_N+1),\nonumber\\ 
&&\min_{1 \leq k \leq N} \bigg( \sum_{i \neq k } T_i+T_k(1-p(C))+\alpha p(C)\mathbb{E}_{A}J^{*}(E-E_{p}-E_{s}+A,T_1+1,\nonumber\\
&&T_2+1, \cdots, T_k'=1,T_{k+1}+1,\cdots,T_N+1)\nonumber\\
&&+\alpha (1-p(C))\mathbb{E}_{A}J^{*}(E-E_{p}-E_{s}+A,T_1+1,T_2+1, \cdots, T_N+1) \bigg)\} \bigg) \bigg\} \nonumber\\
J^{*}(E < E_{p}+E_{s},T_1, T_2,\cdots, T_N)&=&\sum_{i=1}^N T_i+\alpha \mathbb{E}_{A}J^{*}(min\{E+A,B\},T_1+1, T_2+1, \cdots,T_N+1) 
\end{eqnarray}
\normalsize
\hrule
\end{figure*}

\subsubsection{Policy structure}

\begin{lemma}\label{lemma:single-sensor-multiple-process-J-increasing-in-T-decreasing-in-p(C^')}
 For $N>1$, the value function $J^*(E, T_1,T_2, \cdots, T_N)$ is increasing in each of $T_1, T_2, \cdots, T_N $ and $W^{*}(E,T_1, T_2,\cdots, T_N,C)$ is decreasing in $p(C)$. 
 \end{lemma}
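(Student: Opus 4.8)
The plan is to mirror the proof of Lemma~\ref{lemma:single-sensor-single-process-J-decreasing-in-p}: run an induction on the value-iteration index to get the monotonicity of $J^*$ in the ages, and then read off the monotonicity of $W^*$ in $p(C)$ directly from its defining equation. Let $J^{(0)}\equiv 0$ and let $J^{(n+1)}$ be obtained from $J^{(n)}$ by the right-hand side of \eqref{eqn:Bellman-eqn-single-sensor-multiple-process-with-fading}; by the same convergence argument as in Proposition~\ref{proposition:convergence-of-value-function-J}, $J^{(n)}\to J^*$ pointwise, so it suffices to show that the property ``$J^{(n)}(E,T_1,\dots,T_N)$ is nondecreasing in each $T_k$'' is preserved by one value-iteration step. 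Fix a coordinate $i$ and compare the state with $T_i$ replaced by $T_i+1$ against the original, all other coordinates held fixed.

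First I would check the ``do not probe'' branch: its value is $\sum_l T_l+\alpha\mathbb{E}_A J^{(n)}(\min\{E+A,B\},T_1+1,\dots,T_N+1)$, and incrementing $T_i$ raises the linear term by $1$ and, by the induction hypothesis applied to the single coordinate $i$, does not decrease the discounted term; the same observation handles the $E<E_p+E_s$ case. In $W^{(n+1)}$, the ``probe but do not sample'' term is treated identically. For the ``probe and sample process $k$'' term, $\sum_{l\neq k}T_l+T_k(1-p(C))+\alpha p(C)\mathbb{E}_A J^{(n)}(\cdot,\dots,T_k'=1,\dots)+\alpha(1-p(C))\mathbb{E}_A J^{(n)}(\cdot,T_1+1,\dots,T_N+1)$: if $i\neq k$, the variable $T_i$ enters the linear part and appears as coordinate $i{+}1$ in both $J^{(n)}$ evaluations, so monotonicity in coordinate $i$ gives the claim; if $i=k$, the coefficient $(1-p(C))\ge 0$ makes $T_k(1-p(C))$ nondecreasing, the success branch is independent of $T_k$ (it is reset to $1$), and the failure branch is nondecreasing by the induction hypothesis in coordinate $k$. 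Since the inner $\min$ over $k$, the convex combination $V^{(n+1)}=\sum_j q_j W^{(n+1)}(\cdot,C_j)$, and the outer $\min$ with the ``do not probe'' branch all preserve ``nondecreasing in $T_i$,'' the induction closes, and letting $n\to\infty$ yields that $J^*$ is increasing in each $T_k$.

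For the second claim I would use the monotonicity of $J^*$ just obtained, applied to the explicit equation for $W^*$. Regard $p(C)$ as a real parameter $p$. The ``probe but do not sample'' value is constant in $p$, and for each $k$ the sampling value is affine in $p$: $g_k(p)=\sum_{l\neq k}T_l+T_k(1-p)+\alpha p X_k+\alpha(1-p)Y$, where $X_k=\mathbb{E}_A J^*(E-E_p-E_s+A,T_1+1,\dots,T_k'=1,\dots,T_N+1)$ and $Y=\mathbb{E}_A J^*(E-E_p-E_s+A,T_1+1,\dots,T_N+1)$. Its slope is $-T_k-\alpha(Y-X_k)$, which is $\le 0$ because $T_k\ge 0$ and, since $T_k+1\ge 1$, monotonicity of $J^*$ in coordinate $k$ gives $Y\ge X_k$. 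Hence $W^*(E,T_1,\dots,T_N,C)$, being the pointwise minimum of a constant and finitely many nonincreasing affine functions of $p$, is nonincreasing in $p$, i.e.\ decreasing in $p(C)$.

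I expect the only real care to lie in bookkeeping the two-stage structure: noting that every comparison in the induction is between two states differing in a single coordinate, so the one-coordinate induction hypothesis always suffices even though the discounted terms shift all coordinates by $+1$ simultaneously; and checking that the branches referencing $E-E_p$ or $E-E_p-E_s$ are invoked only when $E\ge E_p+E_s$, so every state that appears is admissible. No genuinely new difficulty over the $N=1$ case arises, apart from the presence of the inner $\min_{1\le k\le N}$, which is harmless since a minimum of monotone functions is monotone in the same direction.
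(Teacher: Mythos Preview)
Your proposal is correct and follows essentially the same route as the paper: a value-iteration induction starting from $J^{(0)}\equiv 0$ to get monotonicity of $J^*$ in each $T_k$, followed by the convergence from Proposition~\ref{proposition:convergence-of-value-function-J}. If anything, you are more explicit than the paper on the second claim: the paper's appendix only spells out the induction for the $T_k$-monotonicity and leaves the $W^*$-in-$p(C)$ part implicit, whereas your affine-slope computation $g_k'(p)=-T_k-\alpha(Y-X_k)\le 0$ makes that step transparent.
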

 
\begin{proof} See Appendix ~\ref{appendix:proof-of-lemma-single-sensor-multiple-process-fading-cost-increasing-in-T-decrease-in-p}.
\end{proof}
Let us define $\bm{T}=[T_1, T_2,\cdots, T_N]$ and $\bm{T}_{-k}=[T_1, T_2,\cdots, T_{k-1}, T_{k+1},\cdots, T_N]$. 
\begin{conjecture}\label{theorem:single-sensor-multiple-process-with-fading-policy-structure}
 For $N>1$, the optimal probing policy for the $\alpha$-discounted AoI cost minimization problem is a threshold policy on $\arg \max_{1 \leq k \leq N}T_{k}$. For any $E \geq E_{p}+E_{s} $, the optimal action is to probe the channel state if and only if $\arg \max_{1 \leq k \leq N}T_{k}\geq T_{th}(E,\bm{T}_{-k})$ for a threshold function $T_{th}(E,\bm{T}_{-k})$ of $E$ and $\bm{T}_{-k}$.

 \end{conjecture}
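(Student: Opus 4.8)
The plan is to reproduce, in vector form, the mechanism behind the single‑process probing conjecture (Conjecture~\ref{conjecture:single-sensor-single-process-with-fading-policy-structure}), with the ages of the non‑maximal processes frozen as parameters. Fix $E \ge E_p+E_s$, let $k^{*} \doteq \arg\max_{1\le j\le N} T_j$, and introduce the \emph{probing advantage}
\begin{align*}
\Phi(E,\bm{T}) \doteq \;& \sum_{i=1}^N T_i + \alpha \mathbb{E}_A J^*(\min\{E+A,B\},\, \bm{T}+\bm{1}) \\
& - V^*(E,\bm{T}),
\end{align*}
where $\bm{1}$ is the all‑ones vector, so that probing is optimal in $(E,\bm{T})$ exactly when $\Phi(E,\bm{T})\ge 0$. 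The conjecture is then equivalent to: with $E$ and $\bm{T}_{-k^{*}}$ held fixed, $\Phi$ is nondecreasing in $T_{k^{*}}$ on the region $\{T_{k^{*}}\ge\max_{j\ne k^{*}}T_j\}$; the threshold $T_{th}(E,\bm{T}_{-k})$ is the least such $T_{k^{*}}$ at which $\Phi$ turns nonnegative.

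The first step is to tame the inner minimization $\min_{1\le k\le N}$ in \eqref{eqn:Bellman-eqn-single-sensor-multiple-process-with-fading} by a \emph{sample‑the‑oldest} lemma: conditioned on a probe, it is optimal to take $a(t)=k^{*}$. One expects this because the single‑stage cost $\sum_{i\ne k}T_i+T_k(1-p(C))=\sum_iT_i-T_kp(C)$ is minimized over $k$ by the largest $T_k$, and because after a successful sample of $k^{*}$ the resulting age multiset is obtained from that of any other choice by replacing one entry with a smaller one; a rigorous argument is an interchange carried through value iteration (whose convergence is Proposition~\ref{proposition:convergence-of-value-function-J}), using the permutation‑invariance of $J^{*}$ inherited from the symmetry of \eqref{eqn:Bellman-eqn-single-sensor-multiple-process-with-fading} together with its coordinatewise monotonicity (Lemma~\ref{lemma:single-sensor-multiple-process-J-increasing-in-T-decreasing-in-p(C^')}). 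With this reduction, $W^{*}(E,\bm{T},C)$ is a two‑way minimum of an ``idle‑after‑probe'' term and a ``sample‑$k^{*}$'' term, so $\Phi(E,\bm{T})=\mathbb{E}_C\max\{\,g-(\text{idle‑after‑probe}),\,g-(\text{sample‑}k^{*})\,\}$ with $g$ the not‑probe cost; since $g$ dominates the idle branch (same expression, more energy, and $J^{*}$ nonincreasing in $E$), only the sample branch can be positive, and there the $T_{k^{*}}$‑dependence splits into a favorable linear part $T_{k^{*}}\,p(C)$ and discounted differences of $J^{*}$ between the energy levels $\min\{E+A,B\}$ and $E-E_p-E_s+A$, with $k^{*}$ either aged or reset to $1$.

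To push this through one would establish, by induction along value iteration, a few second‑order properties of $J^{*}$ beyond Lemma~\ref{lemma:single-sensor-multiple-process-J-increasing-in-T-decreasing-in-p(C^')}: (i) every one‑step increment of $J^{*}$ in a single age coordinate lies in $[0,\tfrac{1}{1-\alpha}]$ (by coupling the age‑$T$ and age‑$(T{+}1)$ chains under a common policy); (ii) the value of energy $J^{*}(E',\bm{T})-J^{*}(E'',\bm{T})$, for $E'<E''$, is monotone in each age coordinate; and (iii) a discrete convexity/supermodularity relation in $(E,T_{k^{*}})$. Granting (i)--(iii), the linear term $T_{k^{*}}p(C)$ would be shown to outweigh the variation of the discounted differences, so that $g-(\text{sample‑}k^{*})$, hence $\Phi$, is nondecreasing in $T_{k^{*}}$, giving the threshold; letting $\alpha\to 1$ would transfer the structure to the time‑averaged problem, exactly as for the sampling threshold of Theorem~\ref{theorem:single-sensor-single-process-with-fading-policy-structure}.

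The principal obstacle — and the reason the statement is posed as a conjecture rather than a theorem — is that properties (ii)--(iii) are not obviously preserved by the two‑stage Bellman operator: first differences do not commute with the outer probe/not‑probe minimum, with the inner $\min_{1\le k\le N}$, or with the buffer truncation $\min\{E+A,B\}$, so the usual pairing/interchange bookkeeping leaves slack that must be absorbed by the induction hypothesis in a non‑routine way. Compounding this, when $\alpha$ is close to $1$ the discounted continuation differences can be of order $\tfrac{\alpha}{1-\alpha}$, which dwarfs the stabilizing term $T_{k^{*}}p(C)\le 1$, so any proof must exploit cancellation between the success and failure continuation terms rather than bounding them separately. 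The most promising line of attack is therefore a sample‑path coupling that directly compares the optimal policies from $(E,\bm{T})$ and from the state obtained by raising $T_{k^{*}}$ by one, and shows that the induced probing regions nest; and since the $N=1$ case is itself open (Conjecture~\ref{conjecture:single-sensor-single-process-with-fading-policy-structure}), settling $N=1$ first and then lifting by projecting onto the oldest process is the natural order of operations.
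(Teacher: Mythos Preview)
The paper contains no proof of this statement: it is explicitly labeled a \emph{conjecture}, supported only by the numerical evidence in Section~\ref{section:numerical-work}, and none of the appendices address it. The only proved structural results for $N>1$ are Lemma~\ref{lemma:single-sensor-multiple-process-J-increasing-in-T-decreasing-in-p(C^')} (coordinatewise monotonicity of $J^{*}$ in the ages) and Theorem~\ref{theorem:single-sensor-multiple-process-with-fading-policy-structure} (the $p(C)$ threshold \emph{after} probing). So there is nothing to compare your proposal against, and you are right to treat it as open.

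Your outline is a sensible roadmap and you identify the real obstruction honestly. Two technical points worth flagging. First, you invoke ``$J^{*}$ nonincreasing in $E$'' to conclude that the not-probe cost $g$ dominates the idle-after-probe branch; this monotonicity is nowhere stated or proved in the paper and would itself require an induction through value iteration, complicated by the buffer truncation $\min\{E+A,B\}$ (which acts asymmetrically on the two energy levels being compared). Second, even granting that, your decomposition $\Phi=\mathbb{E}_C\max\{g-\text{idle},\,g-\text{sample}\}$ shows that on channel states where the idle branch is active (small $p(C)$, by Theorem~\ref{theorem:single-sensor-multiple-process-with-fading-policy-structure}) the $T_{k^{*}}$-dependence of $\Phi$ comes \emph{entirely} from the energy-difference term $\alpha\big(\mathbb{E}_A J^{*}(\min\{E+A,B\},\bm{T}+\bm{1})-\mathbb{E}_A J^{*}(E-E_p+A,\bm{T}+\bm{1})\big)$, with no stabilizing linear term $T_{k^{*}}p(C)$ to lean on; so your property~(ii) is already load-bearing before~(iii) enters. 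Your closing remark that settling the $N=1$ conjecture first and then lifting via the sample-the-oldest reduction is the natural order is well taken, and is consistent with how the paper organizes its theorems.
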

 
\begin{theorem}\label{theorem:single-sensor-multiple-process-with-fading-policy-structure}
 For $N>1$, after probing the channel state, the optimal source activation policy for the $\alpha$-discounted cost problem is a threshold policy on p(C). For any $E \geq E_{p}+E_{s} $ and probed channel state, the optimal action is to sample the process $\arg \max_{1 \leq k \leq N} T_k$ if and only if $p(C) \geq p_{th}(E, \bm{T})$ for a threshold function $p_{th}(E, \bm{T})$ of $(E, \bm{T})$.
 
\end{theorem}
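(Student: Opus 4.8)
The plan is to parallel the single‑process argument of Theorem~\ref{theorem:single-sensor-single-process-with-fading-policy-structure}. Throughout I would work with the $\alpha$‑discounted problem and the Bellman system~\eqref{eqn:Bellman-eqn-single-sensor-multiple-process-with-fading}, obtaining the time‑averaged statement at the end by the same $\alpha\to1$ argument used earlier in the paper. Two facts will be combined: (a) among all sampling actions the optimal one is to sample a process of maximal age, so that the inner minimization $\min_{1\le k\le N}$ in the $W^{*}$‑equation collapses to the single index $k^{*}\in\arg\max_{1\le k\le N}T_k$; and (b) with sampling restricted to $k^{*}$, the remaining binary choice (sample $k^{*}$ versus stay idle) enters the probed channel only through $p(C)$ and is monotone in $p(C)$, hence threshold. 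I would also first record that, since both the per‑slot cost $\sum_i T_i$ and the dynamics are invariant under permutations of the processes, $J^{*}(E,T_1,\dots,T_N)$ is a symmetric function of $(T_1,\dots,T_N)$; this follows by an easy induction along the value iteration of Proposition~\ref{proposition:convergence-of-value-function-J}, because every operation defining one iterate from the previous one preserves permutation symmetry.

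\textbf{Step 1: the oldest process is the one to sample.} Fix $E\ge E_p+E_s$, an age vector $\bm T$, and a probed channel $C$, and compare sampling process $k$ with sampling $k^{*}\in\arg\max_{1\le k\le N}T_k$. The single‑stage cost of sampling $k$ equals $\sum_{i\neq k}T_i+T_k(1-p(C))=\sum_i T_i-p(C)\,T_k$, which is minimized over $k$ by $k=k^{*}$. In the continuation cost of~\eqref{eqn:Bellman-eqn-single-sensor-multiple-process-with-fading}, the transmission‑failure branch $\alpha(1-p(C))\mathbb{E}_A J^{*}(E-E_p-E_s+A,\bm T+1)$ is identical for every $k$, while the success branch is $\alpha p(C)\,\mathbb{E}_A J^{*}$ evaluated at the vector obtained from $\bm T+1$ by overwriting its $k$‑th entry with $1$. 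Applying the transposition of coordinates $k$ and $k^{*}$ and invoking symmetry of $J^{*}$, comparing the $k$‑reset vector with the $k^{*}$‑reset vector reduces to comparing two vectors that agree everywhere except in position $k$, where the first carries $T_{k^{*}}+1$ and the second carries $T_k+1\le T_{k^{*}}+1$; by coordinatewise monotonicity of $J^{*}$ (Lemma~\ref{lemma:single-sensor-multiple-process-J-increasing-in-T-decreasing-in-p(C^')}) the $k^{*}$‑reset value is no larger, and this holds for every realization of the harvested energy $A$, hence in expectation. Thus, for every $p(C)$, sampling $k^{*}$ weakly dominates sampling any other process.

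\textbf{Step 2: threshold in $p(C)$.} After Step~1, $W^{*}(E,\bm T,C)=\min\{g_0,\,g_1(p(C))\}$, where $g_0=\sum_i T_i+\alpha\mathbb{E}_A J^{*}(E-E_p+A,\bm T+1)$ is the cost of not sampling and is free of $p(C)$, and
\[
g_1(p)=\Big(\textstyle\sum_i T_i+\alpha\mathbb{E}_A J^{*}(E-E_p-E_s+A,\bm T+1)\Big)+p\,\Delta,
\]
with $\Delta=-T_{k^{*}}+\alpha\big(\mathbb{E}_A J^{*}(E-E_p-E_s+A,\bm T_{\mathrm{res}})-\mathbb{E}_A J^{*}(E-E_p-E_s+A,\bm T+1)\big)$ and $\bm T_{\mathrm{res}}$ the vector $\bm T+1$ with its $k^{*}$‑th entry replaced by $1$. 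Hence $g_1$ is affine in $p:=p(C)$. Since replacing the $k^{*}$‑th entry by $1$ lowers it from $T_{k^{*}}+1$ to $1$, Lemma~\ref{lemma:single-sensor-multiple-process-J-increasing-in-T-decreasing-in-p(C^')} gives $\Delta\le0$, strictly whenever $T_{k^{*}}\ge1$; thus $g_1(\cdot)$ is (strictly) decreasing while $g_0$ is constant in $p$. Therefore $\{p:\,g_1(p)\le g_0\}$ is an up‑set, and setting $p_{th}(E,\bm T):=\inf\{p:\,g_1(p)\le g_0\}$ (interpreted as lying above $p_m$ when this set is empty, and as $0$ when it is all of $[0,1]$) shows that the optimal post‑probing action is to sample $k^{*}=\arg\max_{1\le k\le N}T_k$ if and only if $p(C)\ge p_{th}(E,\bm T)$. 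Letting $\alpha\to1$ transfers this structure to the time‑averaged cost problem~\eqref{eqn:main-problem}.

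\textbf{Main obstacle.} The one step that is not a mechanical computation is Step~1, namely showing that resetting the \emph{largest} age is optimal among all sampling actions: this requires combining permutation symmetry of $J^{*}$ with its coordinatewise monotonicity in an exchange argument, carried out pointwise in the harvested‑energy realization so that the inequality survives the expectation $\mathbb{E}_A$. Everything else — affinity of $g_1$ in $p(C)$, the sign of its slope, extraction of the threshold, and the $\alpha\to1$ limit — is routine and mirrors the $N=1$ proof.
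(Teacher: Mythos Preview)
Your proposal is correct and follows essentially the same approach as the paper's own proof: both first invoke permutation symmetry of $J^{*}$ together with the coordinatewise monotonicity of Lemma~\ref{lemma:single-sensor-multiple-process-J-increasing-in-T-decreasing-in-p(C^')} to reduce the inner minimization to $k^{*}=\arg\max_k T_k$, and then rewrite the sample/not-sample comparison so that the not-sample side is independent of $p(C)$ while the sample side is affine and nonincreasing in $p(C)$, yielding the threshold. Your Step~1 exchange argument is spelled out in more detail than the paper's one-line appeal to symmetry and Lemma~\ref{lemma:single-sensor-multiple-process-J-increasing-in-T-decreasing-in-p(C^')}, and the $\alpha\to1$ remark is unnecessary since the theorem is stated for the discounted problem, but otherwise the proofs coincide.
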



\begin{proof} See Appendix ~\ref{appendix:proof-of-theorem-single-sensor-multiple-process-fading-threshold-policy}.
\end{proof}

The policy structure upholds the two intuitions, (i) for a given $E$ and $(T_1,T_2,\cdots,T_N)$, the source decides to probe the channel state if  highest AoI is greater than some threshold value, 
(ii) for a given $E$ and $(T_1,T_2,\cdots,T_N)$ and probed channel state, if the channel condition is better than a threshold, then the optimal action is to sample the process with highest AoI and send the observation to the sink node. We will later numerically demonstrate some intuitive properties of $T_{th}(E,\bm{T}_{-k})$ as a function of $E$, $\bm{T}_{-k}$ and $\lambda$ and $p_{th}(E,T_1,T_2,\cdots,T_N)$ as a function of $E$, $(T_1,T_2,\cdots,T_N)$ and $\lambda$ in Section~\ref{section:numerical-work}.

\section{Numerical results}\label{section:numerical-work}

\subsection{ Single source, single process ($N=1$) }\label{subsection:single-sensor-single=process-channel-fading}
We consider five channel states ($m=5$) with channel state occurrence probabilities $\bm{q}=[0.2,0.2,0.2,0.2,0.2]$ and the corresponding packet success probabilities $\bm{p}=[0.9, 0.7, 0.5, 0.3, 0.1]$. Energy arrival process is i.i.d. $Bernoulli(\lambda)$ with energy buffer size $B=12$, $E_{p}=1$ unit, $E_{s}=1$ unit, and the discount factor $\alpha=0.99$. Numerical exploration revealed that, there exists a threshold policy on $T$ in decision making for channel state probing; see Figure ~\ref{fig2}(a). It is observed that this $T_{th}(E)$  decreases with $E$ since higher available energy in the energy buffer allows the EH node to probe the channel state more aggressively. Similar reasoning explains the observation that $T_{th}(E)$ decreases with $\lambda$. For probed channel state, Figure~\ref{fig2}(b)  shows the variation of $p_{th}(E,T)$ with $E,T, \lambda$. It is observed that $p_{th}(E,T)$ decreases with $E$, since the EH node  tries to sample the process more aggressively if more energy is available in the  buffer. Similarly, higher value of $T$ results in aggressive sampling, and hence $p_{th}(E,T)$ decreases with $T$. By similar arguments as before, we can explain the observation that this $p_{th}(E,T)$ decreases with $\lambda$. 

\begin{figure}[htb]
   \begin{center}
  \subfloat[]{\includegraphics[height=2.5cm,width=7cm]{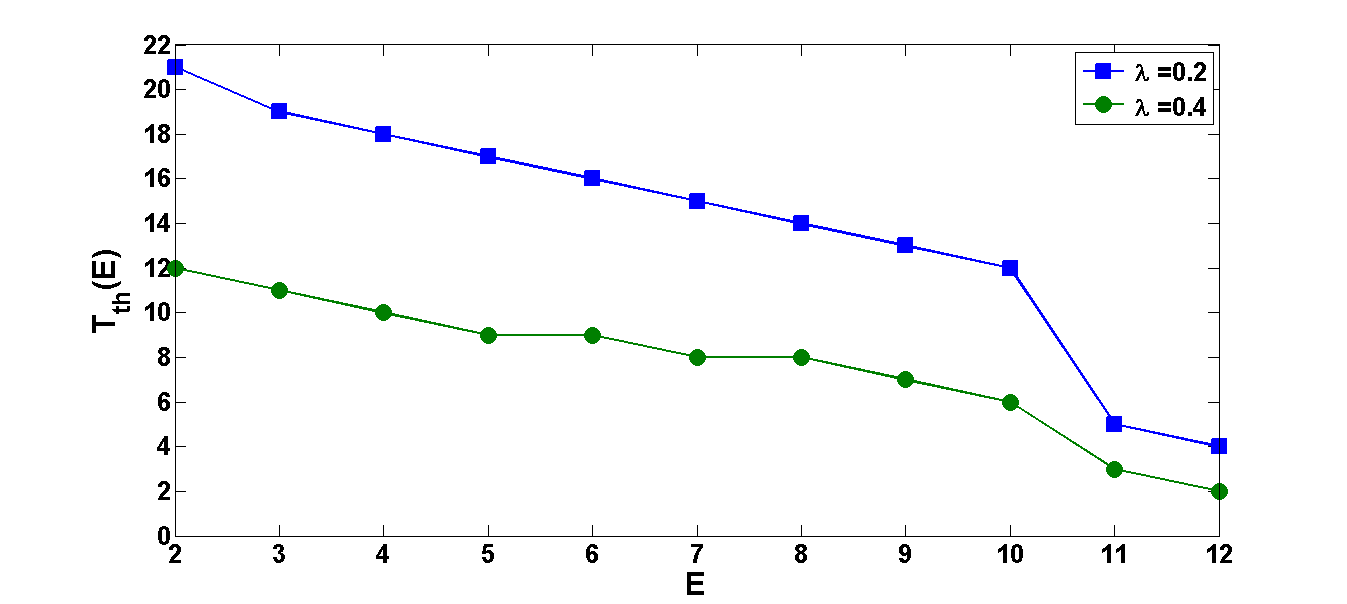}}
     \hfill
   \subfloat[]{\includegraphics[height=2.5cm,width=7cm]{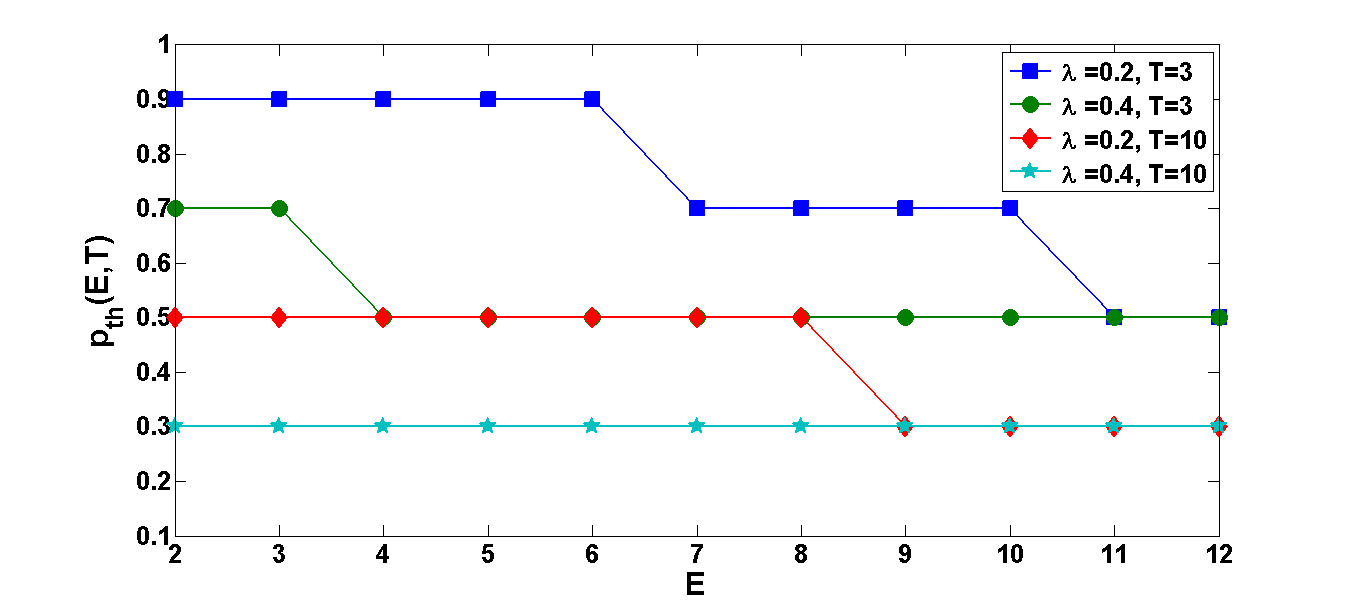}}
  \caption{For $N=1$, (a) Variation of $T_{th}(E)$ with $E, \lambda$  and (b) Variation of $p_{th}(E,T)$ with $E, T, \lambda$.}
  \label{fig2}
 \end{center}
\end{figure}

\vspace{-4mm}

\subsection{ Single source, multiple processes ($N>1$)}
We choose $N=3, \alpha=0.99, B=12, E_{p}=1, E_{s}=1$ and the same channel model and parameters as in Section~\ref{subsection:single-sensor-single=process-channel-fading}. Figure~\ref{fig6}(a) shows the variation on the threshold on $T_1$ for given $T_2, T_3$, for channel state probing. It is observed that  $T_{th}(E, T_2, T_3 )$  decreases with $E$ and $\lambda$. Extensive numerical work also demonstrated that this threshold decreases with each of $T_2$, $T_3$. For probed channel state, Figure~\ref{fig6}(b)  shows  that $p_{th}(E,T_1, T_2, T_3 )$ decreases with $E$ and $\lambda$. Further numerical analysis also demonstrated that this threshold decreases with each of $T_1$, $T_2$, $T_3$ .  

\begin{figure}[htb]
    \begin{center}
  \subfloat[]{\includegraphics[height=2.5cm,width=7cm]{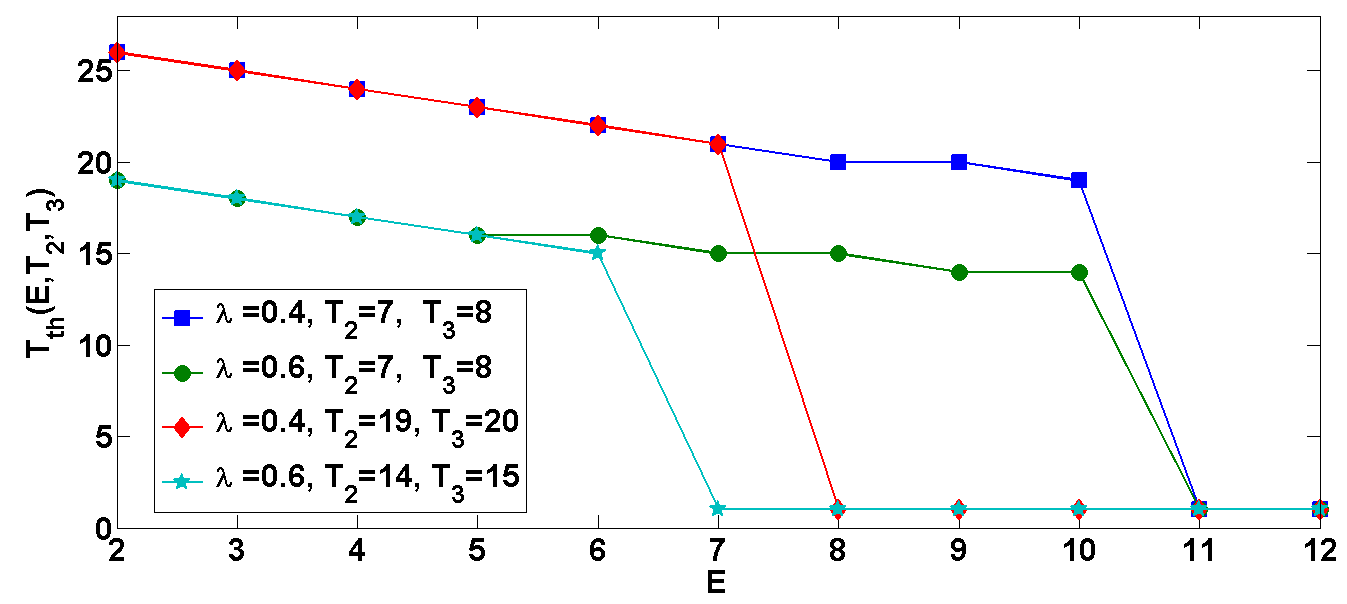}}
  \hfill
  \subfloat[]{  \includegraphics[height=2.5cm,width=7cm]{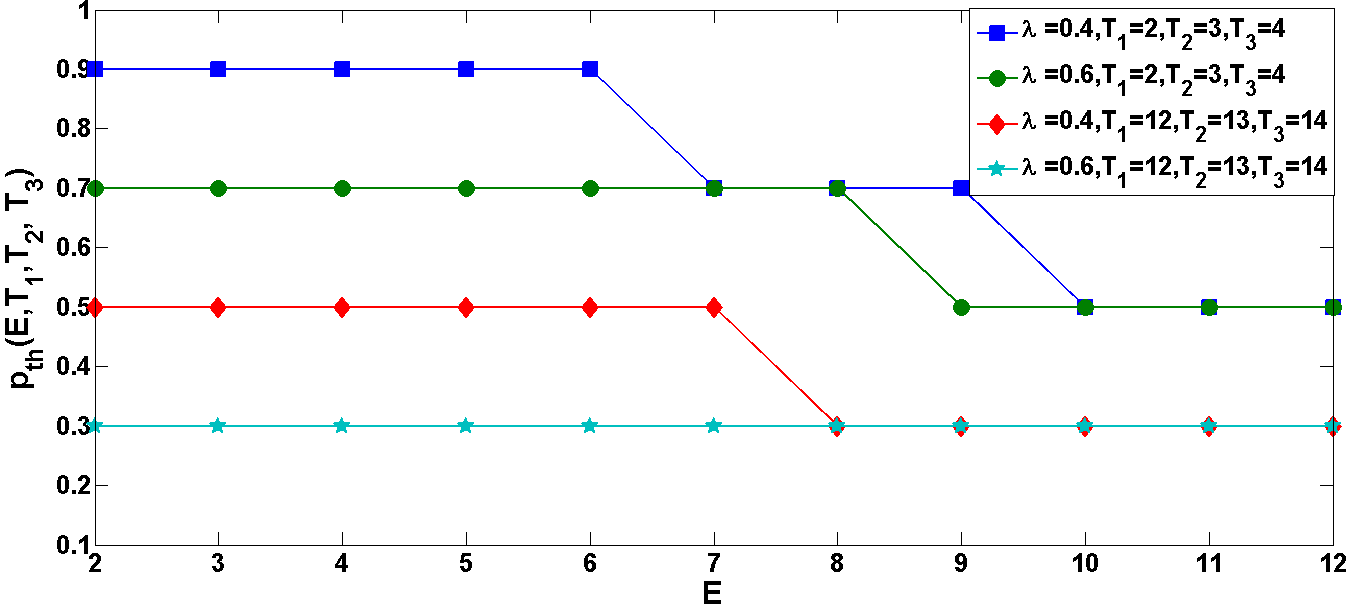}}
  \caption{For $N=3$, (a) Variation of $T_{th}(E,T_2, T_3)$ with $E, T_2,T_3, \lambda$ and (b) Variation of $p_{th}(E,T_1, T_2, T_3)$ with $E,T_1,T_2, T_3, \lambda$.}
  \label{fig6}
  \end{center}
\end{figure}

\vspace{-2mm}

\section{Conclusions}\label{section:conclusion}
In this paper, we  derived the optimal policy structures for minimizing the time-averaged expected AoI under an energy-harvesting source. We    considered  single and multiple processes, i.i.d. time varying channels, and channel probing capability at the source. The optimal source sampling policy turned out to be a threshold policy. Numerical results illustrated the policy structures and trade-offs. We will extend this work for unknown energy generation rate in our future research endeavours.

\newpage
\bibliographystyle{IEEEtran}
\bibliography{ref.bib}
\cleardoublepage
\vspace{12mm}

\appendices
\section {Proof of proposition~\ref{proposition:convergence-of-value-function-J}}\label{appendix:proof-of-convergence-of-value-function-J}
We prove  $max_{s \in \mathcal{S}}|J^{(k)}(s)-J^{*}(s)| \uparrow 0$ as $k \uparrow \infty$. 

Let us define the error $e_{k}=\max_{s \in \mathcal{S}}|J^{(k)}(s)-J^{*}(s)|$\\ and $J^{(0)}(s)$ as initial estimate for $J^{*}(s)$.\\ 
For any state s, we can establish relation between the error at time k+1 to the error at time k in the following way:
\scriptsize
\begin{eqnarray}\label{con1}
&&J^{(k+1)}(E \geq E_{p}+E_{s},T)\nonumber\\
&=& min \bigg\{T+\alpha \mathbb{E}_{A}J^{(k)}(min\{E+A,B\},T+1),  \nonumber\\
&&\mathbb{E}_{C}\bigg(min\{ T+\alpha \mathbb{E}_{A}J^{(k)}(E-E_{p}+A,T+1), \nonumber\\
&&T(1-p(C))+\alpha p(C)\mathbb{E}_{A}J^{(k)}(E-E_{p}-E_{s}+A,1)+ \nonumber\\
&&\alpha (1-p(C))\mathbb{E}_{A}J^{(k)}(E-E_{p}-E_{s}+A,T+1)\} \bigg) \bigg\} 
\end{eqnarray}
\normalsize
We assume there exists an optimal value function $J^{*}(E,T)$ for the discounted cost problem and substituting $J^{(k)}(E,T)$ by $J^{(k)}(E,T) \leq J^{(*)}(E,T) +e_{k}$ in \eqref{con1} we get following equation:

\scriptsize
\begin{eqnarray}\label{con2}
&&J^{(k+1)}(E \geq E_{p}+E_{s},T)\nonumber\\ 
&\leq& min \bigg\{T+ \alpha \mathbb{E}_{A}(J^{*}(min\{E+A,B\},T+1)+e_{k}),\nonumber\\
&&\mathbb{E}_{C}\bigg(min\{ T+\alpha \mathbb{E}_{A}(J^{*}(E-E_{p}+A,T+1)+e_{k}), \nonumber\\
&&T(1-p(C))+\alpha p(C)\mathbb{E}_{A}(J^{*}(E-E_{p}-E_{s}+A,1)+e_{k})\nonumber\\
&&+\alpha (1-p(C))\mathbb{E}_{A}(J^{*}(E-E_{p}-E_{s}+A,T+1)+e_{k})\} \bigg) \bigg\}
\end{eqnarray}

\normalsize
Similarly, substituting $J^{(k)}(E,T)$ by $J^{(k)}(E,T) \geq J^{(*)}(E,T) -e_{k}$ in \eqref{con1} we get following equation:

\scriptsize
\begin{eqnarray}\label{con3}
&&J^{(k+1)}(E \geq E_{p}+E_{s},T)\nonumber\\
&\geq & min \bigg\{T+ \alpha \mathbb{E}_{A}(J^{*}(min\{E+A,B\},T+1)-e_{k}),\nonumber\\
&&\mathbb{E}_{C}\bigg(min\{ T+\alpha \mathbb{E}_{A}(J^{*}(E-E_{p}+A,T+1)-e_{k}), \nonumber\\
&&T(1-p(C))+\alpha p(C)\mathbb{E}_{A}(J^{*}(E-E_{p}-E_{s}+A,1)-e_{k})\nonumber\\
&&+\alpha (1-p(C))\mathbb{E}_{A}(J^{*}(E-E_{p}-E_{s}+A,T+1)-e_{k})\} \bigg) \bigg\}
\end{eqnarray}
\normalsize
Combining the results obtained from equations \eqref{con2} and \eqref{con3}, we get:
\scriptsize
\begin{eqnarray}
&&J^{*}(E \geq E_{p}+E_{s},T)-\alpha e_{k}\nonumber\\
&\leq& J^{(k+1)}(E \geq E_{p}+E_{s},T) \leq J^{*}(E \geq E_{p}+E_{s},T)+\alpha e_{k}
\end{eqnarray}

\begin{eqnarray}
|J^{(k+1)}(E \geq E_{p}+E_{s},T)-J^{*}(E \geq E_{p}+E_{s},T)| \leq \alpha e_{k} 
\end{eqnarray}

\begin{eqnarray}
\max_{s \in \mathcal{S}}|J^{(k+1)}(s)-J^{*}(s)| \leq \alpha e_{k} 
\end{eqnarray}

\begin{eqnarray}\label{con4}
e_{k+1} \leq \alpha e_{k}
\end{eqnarray}

\normalsize
Thus, equation \eqref{con4} gives the relation between the error at time k+1 to the error at time k. 
By backward substitution we get, 
\scriptsize
\begin{eqnarray}\label{final-convergence}
\max_{s \in \mathcal{S}}|J^{(k)}(s)-J^{*}(s)|\leq \alpha^{k}\max_{s \in \mathcal{S}}|J^{(0)}(s)-J^{*}(s)|
\end{eqnarray}
\normalsize
From equation \eqref{final-convergence}, as $k \uparrow \infty$, the error reduces to zero. Hence, $J^{(k)}(s)$ converges to $J^{*}(s)$.

\section{Proof of Lemma~\ref{lemma:single-sensor-single-process-J-decreasing-in-p}}\label{appendix:proof-of-lemma-single-sensor-single-process-fading-cost-increasing-in-T-decrease-in-p}
 We prove this result by value iteration:
 \scriptsize
\begin{eqnarray}
&&J^{(k+1)}(E \geq E_{p}+E_{s},T)\nonumber\\
&=& min \bigg\{T+\alpha \mathbb{E}_{A}J^{(k)}(min\{E+A,B\},T+ 1), \nonumber\\
&&\mathbb{E}_{C}\bigg(min\{ T+\alpha\mathbb{E}_{A}J^{(k)}(E-E_{p}+A, T+1),\nonumber\\
&&T(1-p(C))+\alpha p(C)\mathbb{E}_{A}J^{(k)}(E-E_{p}-E_{s}+ A,1)+\nonumber\\
&&\alpha (1-p(C))\mathbb{E}_{A}J^{(k)}(E-E_{p}-E_{s}+A,T+1)\} \bigg) \bigg\} \nonumber\\
&&J^{(k+1)}(E < E_{p}+E_{s},T)\nonumber\\
&=&T+\alpha \mathbb{E}_{A}J^{(k)}(min\{E+A,B\},T+1)
\end{eqnarray}
\normalsize

Let us start with $J^{(0)}(s) = 0$ for all $s\in S$. Clearly,
$J^{(1)}(E \geq E_{p}+E_{s}, T) =min \{T, \mathbb{E}_{C}(min\{ T,T(1-p(C))\}) \}$ $=min \{T, \mathbb{E}_{C}(T(1-p(C))\} $ and $J^{(1)}(E < E_{p}+E_{s} , T) = T$. Hence, for any given $E$, the value function $J^{(1)}(E, T)$ is an increasing function of $T$ and decreasing function of $p(C)$. As induction hypothesis, we assume that $J^{(k)}(E, T)$ is also increasing function of $T$.

Now, 

\scriptsize
\begin{eqnarray}\label{value-iteration-single-sensor-with-fading-1}
&&J^{(k+1)}(E \geq E_{p}+E_{s},T)\nonumber\\
&=& min \bigg\{T+\alpha \mathbb{E}_{A}J^{(k)}(min\{E+A,B\},T+1),\nonumber\\
&&\mathbb{E}_{C}\bigg(min\{ T+\alpha \mathbb{E}_{A}J^{(k)}(E-E_{p}+A,T+1),\nonumber\\
&& T(1-p(C))+\alpha p(C)\mathbb{E}_{A}J^{(k)}(E-E_{p}-E_{s}+A,1)+ \nonumber\\
&&\alpha (1-p(C))\mathbb{E}_{A}J^{(k)}(E-E_{p}-E_{s}+A,T+1)\} \bigg) \bigg\} 
\end{eqnarray}
\normalsize

We need to show that $J^{(k+1)}(E \geq E_{p}+E_{s},T)$ is also increasing in $T$. The first term inside the minimization operation in   \eqref{value-iteration-single-sensor-with-fading-1} is increasing in $T$, by the induction hypothesis and from the fact that expectation is a linear operation. On the other hand, the second term has linear expectation over channel state and another minimization operator. Also, the first and second terms inside the second minimization operation in \eqref{value-iteration-single-sensor-with-fading-1} are increasing in $T$ by the induction hypothesis and the linearity of expectation operation. Thus, $J^{(k+1)}(E \geq  E_{p}+E_{s},T)$ is also increasing in $T$. By similar arguments, we can claim that $J^{(k+1)}(E < E_{p}+E_{s} ,T)$ is increasing in $T$. Now, since $J^{(k)}(\cdot)\uparrow J^{*}(\cdot)$ as $k \uparrow \infty$ 
by proof of Proposition~\ref {proposition:convergence-of-value-function-J}, $J^{*}(E,T)$ is also increasing in $T$. Hence, the lemma is proved.




\normalsize

\section{Proof of Theorem~\ref{theorem:single-sensor-single-process-with-fading-policy-structure}}\label{appendix:proof-of-theorem-single-sensor-single-process-fading-threshold-policy}

From \eqref{eqn:Bellman-eqn-single-sensor-single-process-with-fading}, it is obvious that for probed channel state the optimal decision for $E \geq  E_{p}+E_{s} $ is to sample the source if and only if the cost of sampling is lower than the cost of not sampling the source,  i.e.,  
$T+\alpha  \mathbb{E}_{A}J^{*}(E-E_{p}+A,T+1) \geq T(1-p(C))+\alpha \mathbb{E}_{A}J^{*}(E-E_{p}-E_{s}+A,T+1)-
\alpha p(C)\bigg(\mathbb{E}_{A}J^{*}(E-E_{p}-E_{s}+A,T+1)-\mathbb{E}_{A}J^{*}(E-E_{p}-E_{s}+A,1)\bigg)$.
Now, by Lemma~\ref{lemma:single-sensor-single-process-J-decreasing-in-p}, $\mathbb{E}_{A}J^{*}(E-E_{p}-E_{s}+A,T+1)-  \mathbb{E}_{A}J^{*}(E-E_{p}-E_{s}+A,1)$ is non-negative. Thus the R.H.S. decreases with $p(C)$, whereas the L.H.S. is independent of $p(C)$. Hence, for probed channel state the optimal action is to sample if and only if $p(C) \geq p_{th}(E,T)$ for some suitable threshold function $p_{th}(E,T)$.

\section{Proof of Lemma~\ref{lemma:single-sensor-multiple-process-J-increasing-in-T-decreasing-in-p(C^')}}\label{appendix:proof-of-lemma-single-sensor-multiple-process-fading-cost-increasing-in-T-decrease-in-p}
The proof is similar to the proof of Lemma~\ref{lemma:single-sensor-single-process-J-decreasing-in-p} and it follows from the convergence of value iteration as given below:
\scriptsize
\begin{eqnarray}\label{value-iteration-single-sensor-multiple-process-with-fading}
&&J^{(k+1)}(E \geq E_{p}+E_{s},T_1, T_2,\cdots, T_N)\nonumber\\
&=&min \bigg\{\sum_{i=1}^N T_i+\alpha \mathbb{E}_{A}J^{(k)}(min\{E+A,B\},T_1+1, T_2+1, \cdots,\nonumber\\
&&T_N+1),\mathbb{E}_{C} \bigg( min\{\sum_{i=1}^N T_i+\alpha \mathbb{E}_{A}J^{(k)}(E-E_{p}+A,T_1+1, T_2+1, \nonumber\\ 
&&\cdots,  T_N+1),\min_{1 \leq k \leq N} \bigg( \sum_{i \neq k } T_i+T_k(1-p(C))+\alpha p(C)\mathbb{E}_{A}J^{(k)}(E\nonumber\\
&&-E_{p}-E_{s}+A,T_1+1,T_2+1, \cdots, T_k'=1,T_{k+1}+1,\cdots,T_N+1) \nonumber\\
&&+\alpha (1-p(C))\mathbb{E}_{A}J^{(k)}(E-E_{p}-E_{s}+A,T_1+1,T_2+1, \cdots,\nonumber\\
&&T_N+1) \bigg)\} \bigg) \bigg\} \nonumber\\
&&J^{(k+1)}(E < E_{p}+E_{s},T_1, T_2,\cdots, T_N)\nonumber\\
&=&\sum_{i=1}^N T_i+\alpha \mathbb{E}_{A}J^{(k)}(min\{E+A,B\},T_1+1, T_2+1, \cdots,T_N+1)\nonumber\\ 
\end{eqnarray}
\normalsize
Let us start with $J^{(0)}(s) = 0$ for all $s\in S$. Clearly,
$J^{(1)}(E \geq  E_{p}+E_{s}, T_1, T_2,\cdots, T_N) = \min \{\sum_{i=1}^N T_i, \mathbb{E}_{C}(min\{\sum_{i=1}^N T_i,\min_ {1 \leq k \leq N} ( \sum_{i \neq k } T_i+T_k(1-p(C)))\}\}$ and $J^{(1)}(E < E_{p}+E_{s}, T_1, T_2,\cdots, T_N) = \sum_{i=1}^N T_i$. Hence, for any given $E$, the value function $J^{(1)}(E,T_1, T_2,\cdots, T_N)$ is an increasing function of  $T_1, T_2, \cdots, T_N $. As induction hypothesis, we assume that $J^{(k)}(E,T_1, T_2,\cdots, T_N)$ is also increasing function of $T_1, T_2, \cdots, T_N $. 

Now,
\scriptsize
\begin{eqnarray}\label{value-iteration-single-sensor-multiple-process-with-fading-1}
&&J^{(k+1)}(E \geq E_{p}+E_{s},T_1, T_2,\cdots, T_N)\nonumber\\
&=&min \bigg\{\sum_{i=1}^N T_i+\alpha \mathbb{E}_{A}J^{(k)}(min\{E+A,B\},T_1+1, T_2+1, \cdots,\nonumber\\
&&T_N+1),\mathbb{E}_{C} \bigg( min\{\sum_{i=1}^N T_i+\alpha \mathbb{E}_{A}J^{(k)}(E-E_{p}+A,T_1+1, T_2+1, \nonumber\\ 
&&\cdots,  T_N+1),\min_{1 \leq k \leq N} \bigg( \sum_{i \neq k } T_i+T_k(1-p(C))+\alpha p(C)\mathbb{E}_{A}J^{(k)}(E\nonumber\\
&&-E_{p}-E_{s}+A,T_1+1,T_2+1, \cdots, T_k'=1,T_{k+1}+1,\cdots,T_N+1) \nonumber\\
&&+\alpha (1-p(C))\mathbb{E}_{A}J^{(k)}(E-E_{p}-E_{s}+A,T_1+1,T_2+1, \cdots,\nonumber\\
&&T_N+1) \bigg)\} \bigg) \bigg\} 
\end{eqnarray}
\normalsize

We seek to show that $J^{(k+1)}(E \geq E_{p}+E_{s} ,T_1, T_2,\cdots, T_N)$ is also increasing in each of $T_1, T_2, \cdots, T_N$. The first term inner to the minimization operation in \eqref{value-iteration-single-sensor-multiple-process-with-fading-1} is increasing in each of $T_1, T_2, \cdots, T_N$, utilizing the induction hypothesis and linear property of expectation operation. On the other hand, the second term has expectation over channel state and another minimization operator. Also, the fist and second term of second minimization operator is increasing in each of $T_1, T_2, \cdots, T_N$ by using induction hypothesis and the linearity of expectation operation. Thus, $J^{(k+1)}(E\geq E_{p}+E_{s} , T_1, T_2,\cdots, T_N)$ is also increasing in each of $T_1, T_2, \cdots, T_N$. 
Similarly, we can assert that $J^{(k+1)}(E < E_{p}+E_{s},T_1, T_2,\cdots, T_N)$ is increasing in each of $T_1, T_2, \cdots, T_N$.
Now, since $J^{(k)}(\cdot)\uparrow J^{*}(\cdot)$ as $k \uparrow \infty$, $J^*(E,T_1, T_2,\cdots, T_N)$ is also increasing in each of $T_1, T_2,\cdots, T_N$. Hence, the lemma is proved.

\section{Proof of Theorem ~\ref{theorem:single-sensor-multiple-process-with-fading-policy-structure}}\label{appendix:proof-of-theorem-single-sensor-multiple-process-fading-threshold-policy}
It is obvious that $J^*(\cdot)$ is invariant to any permutation of $(T_1,T_2,\cdots, T_N)$. Hence, by Lemma~\ref{lemma:single-sensor-multiple-process-J-increasing-in-T-decreasing-in-p(C^')},  $ \arg \min_{1 \leq k \leq N}\bigg(\sum_{i \neq k } T_i+T_k(1-p(C))+\alpha p(C)\mathbb{E}_{A}J^{(k)}(E-E_{p}-E_{s}+A,T_1+1,T_2+1, \cdots, T_k'=1,T_{k+1}+1,\cdots,T_N+1)+\alpha (1-p(C))\mathbb{E}_{A}J^{(k)}(E-E_{p}-E_{s}+A,T_1+1,T_2+1, \cdots,T_N+1) \bigg)=\arg \max_{1 \leq k \leq N} T_k$, i.e., the best process to activate is $k^* \doteq \arg \max_{1 \leq k \leq N} T_k$ in case one process has to be activated. For probed channel state, it is optimal to sample a process if and only if the cost of sampling this process is less than or equal to the cost of not sampling this process, which translates into  $T_{k^*}+\alpha \mathbb{E}_{A}J^{*}(E-E_{p}+A,T_1+1, T_2+1,\cdots,  T_N+1) \geq  T_{k^*}(1-p(C))+
\alpha \mathbb{E}_{A}J^{*}(E-E_{p}-E_{s}+A,T_1+1,T_2+1, \cdots,T_N+1)-\alpha p(C)\bigg(\mathbb{E}_{A}J^{*}(E-E_{p}-E_{s}+A,T_1+1,T_2+1, \cdots,T_N+1)-\mathbb{E}_{A}J^{*}(E-E_{p}-E_{s}+A,T_1+1,T_2+1, \cdots, T_k'=1,T_{k+1}+1,\cdots,T_N+1)\bigg)$. Now, by Lemma~\ref{lemma:single-sensor-multiple-process-J-increasing-in-T-decreasing-in-p(C^')}, $\mathbb{E}_{A}J^{*}(E-E_{p}-E_{s}+A,T_1+1,T_2+1, \cdots,T_N+1)-\mathbb{E}_{A}J^{*}(E-E_{p}-E_{s}+A,T_1+1,T_2+1, \cdots, T_k'=1,T_{k+1}+1,\cdots,T_N+1) $ is non negative. Thus, the  R.H.S. is decreasing in $p(C)$  and the L.H.S. is independent of $p(C)$. Hence, the threshold structure of the optimal sampling policy is proved.
\normalsize

\end{document}